\declaretheorem{theorem}
\declaretheorem{lemma}
\declaretheorem{claim}
\begin{document}
%
% paper title
% Titles are generally capitalized except for words such as a, an, and, as,
% at, but, by, for, in, nor, of, on, or, the, to and up, which are usually
% not capitalized unless they are the first or last word of the title.
% Linebreaks \\ can be used within to get better formatting as desired.
% Do not put math or special symbols in the title.
\title{An Efficient and Robust Committee Structure \\ for Sharding Blockchain}
%
%
% author names and IEEE memberships
% note positions of commas and nonbreaking spaces ( ~ ) LaTeX will not break
% a structure at a ~ so this keeps an author's name from being broken across
% two lines.
% use \thanks{} to gain access to the first footnote area
% a separate \thanks must be used for each paragraph as LaTeX2e's \thanks
% was not built to handle multiple paragraphs
%
%
%\IEEEcompsocitemizethanks is a special \thanks that produces the bulleted
% lists the Computer Society journals use for "first footnote" author
% affiliations. Use \IEEEcompsocthanksitem which works much like \item
% for each affiliation group. When not in compsoc mode,
% \IEEEcompsocitemizethanks becomes like \thanks and
% \IEEEcompsocthanksitem becomes a line break with idention. This
% facilitates dual compilation, although admittedly the differences in the
% desired content of \author between the different types of papers makes a
% one-size-fits-all approach a daunting prospect. For instance, compsoc 
% journal papers have the author affiliations above the "Manuscript
% received ..."  text while in non-compsoc journals this is reversed. Sigh.

\author{Mengqian~Zhang,~\IEEEmembership{IEEE~Member,}
        Jichen~Li,~\IEEEmembership{IEEE~Member,}
        Zhaohua~Chen,~\IEEEmembership{IEEE~Member,}
        Hongyin~Chen,~\IEEEmembership{IEEE~Member,}
        and~Xiaotie~Deng,~\IEEEmembership{IEEE~Fellow}% <-this % stops a space
\IEEEcompsocitemizethanks{\IEEEcompsocthanksitem M. Zhang is in the Department of Computer Science and Engineering, Shanghai Jiao Tong University, Shanghai, 200240, China (E-mail: mengqian@sjtu.edu.cn).\protect\\
%\IEEEcompsocthanksitem M. Shell was with the Department of Electrical and Computer Engineering, Georgia Institute of Technology, Atlanta, GA, 30332.\protect\\
% note need leading \protect in front of \\ to get a newline within \thanks as
% \\ is fragile and will error, could use \hfil\break instead.
%E-mail: see http://www.michaelshell.org/contact.html
\IEEEcompsocthanksitem J. Li, Z. Chen, H. Chen and X. Deng are in the Center on Frontiers of Computing Studies, Computer Science Dept., Peking University, Beijing, 100871, China (E-mail: \{2001111325, chenzhaohua, chenhongyin, xiaotie\}@pku.edu.cn).\protect\\

\IEEEcompsocthanksitem X. Deng is the corresponding author. M. Zhang, J. Li, Z. Chen and H. Chen contributed equally to this work.
%\IEEEcompsocthanksitem J. Doe and J. Doe are with Anonymous University.
}% <-this % stops an unwanted space

\thanks{This work is supported by Science and Technology Innovation 2030 - “New Generation Artificial Intelligence” Major Project No.(2018AAA0100901).}}

\IEEEtitleabstractindextext{%
\begin{abstract}
%Traditional blockchain systems suffer from serious scalability problems. There is a huge gap between their transaction throughput and practical needs. In recent years, sharding is deemed as a promising way to overcome this problem. By partitioning all nodes into small committees and letting them work in parallel, sharding can help to significantly lower the amount of computation, reduce the overhead on each node’s storage, as well as enhance the throughput of blockchain.
Nowadays, sharding is deemed as a promising way to save traditional blockchain protocols from their low scalability. However, such technique also brings several potential risks and huge communication overheads. An improper design may give rise to the inconsistent state among different committees. Further, the communication overheads arising from cross-shard transactions unfortunately reduce the system’s performance. In this paper, we first summarize five essential issues that all sharding blockchain designers face. For each issue, we discuss its key challenge and propose our suggested solutions. In order to break the performance bottlenecks, we propose a reputation mechanism for selecting leaders. The term of reputation in our design reflects each node's honest computation resources. In addition, we introduce a referee committee and partial sets in each committee, and design a recovery procedure in case the leader is malicious. Under the design, we prove that malicious leaders will not hurt the system and will be evicted. Furthermore, we conduct a series of simulations to evaluate our design. The results show that selecting leaders by the reputation can dramatically improve the system performance.
\end{abstract}

% Note that keywords are not normally used for peerreview papers.
\begin{IEEEkeywords}
Sharding Blockchain, Reputation Mechanism, Leader Selection, Hierarchical Design, Recovery Procedure.
\end{IEEEkeywords}}

% make the title area
\maketitle

% To allow for easy dual compilation without having to reenter the
% abstract/keywords data, the \IEEEtitleabstractindextext text will
% not be used in maketitle, but will appear (i.e., to be "transported")
% here as \IEEEdisplaynontitleabstractindextext when the compsoc 
% or transmag modes are not selected <OR> if conference mode is selected 
% - because all conference papers position the abstract like regular
% papers do.
\IEEEdisplaynontitleabstractindextext
% \IEEEdisplaynontitleabstractindextext has no effect when using
% compsoc or transmag under a non-conference mode.

% For peer review papers, you can put extra information on the cover
% page as needed:
% \ifCLASSOPTIONpeerreview
% \begin{center} \bfseries EDICS Category: 3-BBND \end{center}
% \fi
%
% For peerreview papers, this IEEEtran command inserts a page break and
% creates the second title. It will be ignored for other modes.
\IEEEpeerreviewmaketitle

\IEEEraisesectionheading{\section{Introduction}\label{sec:introduction}}
% Computer Society journal (but not conference!) papers do something unusual
% with the very first section heading (almost always called "Introduction").
% They place it ABOVE the main text! IEEEtran.cls does not automatically do
% this for you, but you can achieve this effect with the provided
% \IEEEraisesectionheading{} command. Note the need to keep any \label that
% is to refer to the section immediately after \section in the above as
% \IEEEraisesectionheading puts \section within a raised box.

% The very first letter is a 2 line initial drop letter followed
% by the rest of the first word in caps (small caps for compsoc).
% 
% form to use if the first word consists of a single letter:
% \IEEEPARstart{A}{demo} file is ....
% 
% form to use if you need the single drop letter followed by
% normal text (unknown if ever used by the IEEE):
% \IEEEPARstart{A}{}demo file is ....
% 
% Some journals put the first two words in caps:
% \IEEEPARstart{T}{his demo} file is ....
% 
% Here we have the typical use of a "T" for an initial drop letter
% and "HIS" in caps to complete the first word.
\IEEEPARstart{T}{he} blockchain revolution has established a milestone with Nakamoto's Bitcoin~\cite{nakamoto2019bitcoin} during the long search for a dreaming digital currency. It maintains the distributed database in a decentralized manner and has achieved a certain maturity to provide the Internet community with a service that captures the most important features of cash: secure, anonymity, easy to carry, easy to change hand, and exchangeable across  national boundaries. At the same time, Bitcoin realizes the tamper proof property needed for transactions. 

%Low scalability is a fatal flaw for today's most popular blockchains. In the context of blockchains, a system is considered scalable if its throughput (transactions per second or \textit{TPS} for short) grows with the increase of computing resources. In most blockchain projects, all nodes in the network have to agree on a certain set of transactions. Therefore, only a rather fixed amount of transactions can be included in a block over a period of time, regardless of the number of nodes in the network. Such design leads to a quite low throughput, rendering major blockchain systems (including Bitcoin~\cite{nakamoto2019bitcoin} and Ethereum~\cite{wood2014ethereum}) suffer from serious scalability problem. 

Unfortunately, state-of-the-art blockchain systems (\textit{e.g.}, Bitcoin~\cite{nakamoto2019bitcoin} and Ethereum~\cite{wood2014ethereum}) suffer from serious scalability problem. In the context of blockchains, a system is considered scalable if its throughput (transactions per second or \textit{tps} for short) grows with the increase of computing resources. In most blockchain projects, all nodes in the network have to agree on a certain set of transactions. Therefore, only a rather fixed amount of transactions can be included in a block over a period of time, regardless of the number of nodes in the network.

%Unfortunately, the transaction throughput of Bitcoin is 3 to 4 orders of magnitude lower than those for centralized payment systems such as Visa. This weakness limits its popularization. Specifically, Bitcoin can process only 3.3-7 transactions per second in 2016~\cite{CromanDEGJKMSSS16} while Visa handles more than 24,000 transactions per second~\cite{Visa}. The low scalability of Bitcoin is the key factor causing this displeasing result.

The classic design leads to a quite low throughput. The tps of Bitcoin is 3 to 4 orders of magnitude lower than those for centralized payment systems such as Visa. Specifically, Bitcoin can process only 3.3-7 transactions per second in 2016~\cite{CromanDEGJKMSSS16} while Visa handles more than 24,000 transactions per second~\cite{Visa}. Such weakness severely limits the popularization of blockchain.

To overcome this problem, a promising way is to borrow the idea of \textit{sharding} from the field of database. Sharding is a well-known technique to build the scale-out database. The main idea is to break up large tables into smaller chunks and spread them across multiple servers. It enables the sharing of overall workload and greatly increases the performance. 

Inspired by this idea, researchers in the area of blockchain apply sharding to improve blockchains' scalability in recent years. Similarly, we can partition nodes into parallel committees and have each committee maintain a subset of transactions. Under this method, the transaction processing rate is proportional to the number of committees rather than a constant. Therefore, less computational power is wasted, higher processing capacity is achieved and the system scales well. %Currently, sharding is considered as an optimistic way to help blockchain scale well in literature.

%Sharding is well known in building the “scale-out” database by separating the whole state into multiple parts and making them work concurrently. Inspired by this idea, the use of sharding is applied to improve distributed ledger's performance in these years. When nodes are partitioned into groups, less computational power is wasted and higher processing capacity is achieved. 

%However, most sharding-based protocols fail to maintain high efficiency, especially in the presence of betrays by important committee leaders in charge of blockchain security~\cite{Kokoris-KogiasJ18,ZamaniM018}. They also fail to give an explicit incentive for nodes to join the protocol and behave honestly.

%We present CycLedger, a fully decentralized payment-processor that provides the scalability over the number of participating nodes and a proof of security. Specifically, our protocol  remains robust when leaders in every committee are faulty. This is a problem that remains unsolved until now. Meanwhile, we hope there is enough encouragement for a node to participate honestly in our protocol. Furthermore, scalability is realized via sharding nodes into concurrent committees as previous works have shown~\cite{Kokoris-KogiasJ18,LuuNZBGS16,ZamaniM018}. 

Despite its benefits, the application of sharding in blockchain also imposes a great deal of complexity and potential risks. Rather than accessing and updating data from local storage, nodes have to process transactions across multiple shards. An improper design may cause the inconsistency of data, which is an enormous threat to the system. Thus, it is of great importance and urgency to summarize and present general design process, principles and suggestions for the sharding protocols. In addition, %existing shading blockchains mainly take the classic Byzantine Fault Tolerance (BFT) algorithm for consensus. 
cross-shard transactions make it inevitable to carry out large amount of communications among different committees, which unfortunately reduces the performance of the system. This is a crucial issue that has not been properly resolved until now.
 In this paper, we identify five essential issues in sharding blockchain, which together composite a complete protocol. For each issue, we discuss its key challenge, give a brief review on existing approaches, and propose our suggested solution.
 By doing numerical studies, we show that cross-shard transactions take an overwhelming fraction in the sharding blockchain. These transactions inevitably bring heavy overheads on communicating, which is the task of leader in each committee. As a result, their capability becomes a major bottleneck of the whole protocol. 
 
 For better performance, we design a reputation mechanism. Specifically, the reputation of a node reflects its honest computation resource. In each round, those nodes with higher reputation will be selected as leaders. For safety and liveness of the protocol, we introduce a unique referee committee and partial sets in each committee. Briefly, the referee committee act as a arbitrator between opposing parties, and partial sets take the responsibility to supervise the leader's behavior. Further, we propose a recovery procedure, which is triggered upon a malicious leader is detected. As a consequence, the malicious leader is evicted and a new leader is re-selected immediately, ensuring the system running properly.
 
 Theoretically, we prove that selecting leaders via reputation and the committee assignment scheme are secure with overwhelming probability. Also, under such design, the transaction processing procedure satisfies safety and liveness. Further simulation results support that the reputation mechanism could effectively help to filter out those nodes with abundant computational resources. By arranging these nodes as leaders, each committee could process more transactions and further improve the system performance.  %analytically and experimentally

This paper is organized as follows. In Section~\ref{sec:issues}, we discuss five key issues in sharding blockchain. Section~\ref{sec:model} states system model and elaborate the problem we aim to solve. We propose our solution in Section~\ref{sec:solution}, including the reputation mechanism and the recovery procedure. After that, we give the theoretical analysis on our design in Section~\ref{sec:analysis} and conduct a series
of simulations in Section~\ref{sec:simulation}. Finally, we review previous studies in Section~\ref{sec:related} and conclude the paper in Section~\ref{sec:conclusion}. 

\section{Issues in Sharding Blockchains}\label{sec:issues}

When it comes to sharding blockchain protocols, there are five basic issues to consider. Like previous works, we will analyze these problems based on the \textit{UTXO} (Unspent Transaction Output) model. A UTXO indicates the amount of digital currency that its owner can use later. Specifically, a transaction will take one or multiple UTXOs as inputs, and then output new UTXOs. And each UTXO can only be spent once.
%is the amount of digital currency remaining after a transaction is executed. Under such model, any transaction takes one or multiple UTXOs as input (resembling the spent of assets for one or multiple parties), and outputs another one or multiple UTXOs (resembling the gain of assets for another one or multiple parties). 

\subsection{Issue \#1: What to split?}\label{sec:issue1}
%The first thing is to think what should be partitioned. In line with the original intention of the sharding technology, the following is our answer:
%\begin{itemize}
%	\item \textbf{Nodes}. Sharding-based blockchains divide nodes into disjoint groups, each of which are called a \textit{committe}. As a result, all committees work in parallel, to increase the processing speed.
%	\item \textbf{Pending transactions}. Instead of storing all transactions into mempool by every node, the pending transactions are also partitioned into different groups, which we call \textit{shards}. It is worth noting that the number of shards is equal to the committees'. Each shard of transactions is just stored and processed by the corresponding committee. 
%	\item \textbf{Blockchain history}. To lessen the storage pressure of nodes, the heavy blockchain data are also split and each committee maintains a sub-chain.
%\end{itemize}

The first issue is to determine what to partition. Before getting into the details, we first distinguish a pair of essential concepts. In the context, when referring to the term \textit{nodes}, we mean the participants who invest resources (hardware, electricity, etc) and offer services (such as processing transactions) in blockchain systems, similar to the miner in Bitcoin. On the other hand, the participants who use services (\textit{e.g.}, generating transactions) are known as \textit{users}. %resembling a transaction party in Bitcoin. 

In line with the original intention of the sharding technology, we answer the question as follows:
\begin{itemize}
	\item \textbf{Nodes}. A sharding blockchain divides nodes into disjoint groups, each of which is known as a \textit{committee}. All committees work concurrently, therefore increasing the transaction processing speed.
	\item \textbf{Pending transactions}. Instead of being kept in the mempool %(the collaborative waiting area for all pending transactions) 
	of every node, pending transactions are also partitioned into different groups, which we call \textit{shards}. It is worth noting that there is a one-to-one mapping from shards to committees. Each shard of transactions is only stored and processed by the nodes in the corresponding committee. 
	\item \textbf{Blockchain history}. To alleviate the full storage burden on nodes, the heavy blockchain data is also split and maintained by separated committees, in the form of a sub-blockchain with lower volume.
\end{itemize}

\subsection{Issue \#2: How to split?}\label{sec:issue2}

As a subsequence, it is important to ponder the methods to partition the nodes, pending transactions, and blockchain history. 

Under the existence of potential Byzantine attackers, nodes are expected to be distributed uniformly into committees. The uniform assignment keeps all committees under a safe Byzantine fraction, and drives away the possibility for malicious nodes to gather in and take over certain committee. To realize a uniformly random assignment, the key is to generate a trusted randomness distributedly, which has long been a critical task in blockchain. Several protocols~\cite{BentovPS16a,DavidGKR18} use an external cryptographic hash function, which takes an unpredictable and tamper-resistant value (\textit{e.g.}, the Merkle root of transactions in the previous block) as input, and the output of the function is the randomness. Other attempts include Verifiable Random Functions (VRFs)~\cite{MicaliRV99} in Algorand~\cite{GiladHMVZ17}, and Publicly Verifiable Secret Sharing (PVSS) schemes~\cite{Schoenmakers99,CascudoD17} in Ouroboros~\cite{KiayiasRDO17} and OmniLedger~\cite{Kokoris-KogiasJ18}. 

For the purposes of load balancing, the transaction is assigned to one committee based on its id, \textit{i.e.}, the hash value of the transaction. Specific designs vary among different protocols. One method is to use the first $k$ bits of the transaction id to determine its shard, assuming there are $2^k$ shards (and committees) in the system. For example, all transactions beginning with 010 in their hash belong to shard \#2 (here $k = 3$). We can also determine the belonging shard of a transaction by calculating its id modulo $2^k$. To prevent unbalanced loading caused by deliberate users when generating transactions, some protocols add salt when calculating the hash of transactions~\cite{ManuskinME20}.

We emphasize that, under any proper sharding configuration, each transaction should be precisely maintained by a unique committee, to avoid consistency issues\footnote{Or else, different committees may come up with conflicting results concerning the same transaction.}. Accordingly, the blockchain history is also separated, and each virtual UTXO is exactly kept by a single committee.

\subsection{Issue \#3: How to deal with the dynamic membership?}\label{sec:issue3}

As discussed previously, the sharding technique aims at solving the scalability issue for permissionless blockchains, in which nodes can join and leave freely. In this scenario, the attackers may execute sufficiently many join and leave operations to manipulate a certain committee. Thus, it is necessary to adjust the committee assignments periodically. A simple idea is to reshuffle all nodes at the beginning of each round. Such idea has several shortcomings. First, a full bootstrapping procedure will stop the whole system. All nodes halt until new committees are formed. Second, frequent reorganization brings expensive communication overheads. Specifically, nodes have to change connection with different sets of committee members. Also, they need to fetch data of the new ledger from other nodes.

To balance usability and performance, replacing only a subset of committee members is a better alternative. Such solution deals with nodes' joining and leaving smoothly, and resists against the slowly-adaptive Byzantine adversary. As a practice, RapidChain~\cite{ZamaniM018} borrows the idea of Cuckoo rule~\cite{AwerbuchS09}, requiring that only a constant number of nodes is switched to other committees in each round. Yet, this method leaves the adversary enough room to take over a certain committee and further cause damage to the whole system. More specifically, at some round $r$, the adversary may plan to corrupt all members in a committee during round $r+d$. Afterwards, even if a constant number of members are switched out, the majority of this committee will be malicious.

%Here, we propose the ECFR (Expected Constant-Fraction Reshuffling) scheme with parameter $\alpha\in (0, 1)$. In expectation, ECFR requires a constant fraction of members inside a committee to be switched out in each round. Precisely:

In order to solve the dynamic membership issue properly, we propose an Expected Constant-Fraction Reshuffling (ECFR) scheme with parameter $\alpha$. In expectation, it requires a constant fraction of members inside a committee to be switched out in each round. Precisely, the $\mathbf{\alpha}$\textbf{-ECFR scheme} is designed as follows: Let $\alpha \in (0, 1)$ be a constant. In each round, we independently mark each node with probability $\alpha$, and uniformly reassign all \textit{marked} nodes to all committees, assuring that all committees have the same size.

We mention that the number of reshuffled nodes of $\alpha$-ECFR scheme lies between total reshuffling and RapidChain's solution. With little bias of notation, total reshuffling is an extreme case of $\alpha$-ECFR scheme with $\alpha = 1$. Further, there exists a constant $\alpha$ enabling that with high probability honest nodes take the majority in all committees. We prove this security property in Section~\ref{sec:reshuffling}.

%Further, we present the following theorem to demonstrate the effectiveness of ECFR scheme. Here, the mentioned \textit{honesty} means the property that honest nodes take the majority in all committees.
%for the simplicity of expression, we define the \textit{honesty} property for each round of a sharding protocol.
%\begin{itemize}
%    \item \textit{Honesty.} Honest nodes take the majority in all committees.
%\end{itemize}

%\begin{restatable}{theorem}{ECFR}\label{the:ECFR}
%Under our system model (see Section~\ref{sec:model}), there exists a constant $\alpha\in (0, 1)$, such that under $\alpha$-ECFR scheme, any round satisfies honesty with high probability, given that all previous rounds satisfy honesty.
%\end{restatable}

%\begin{theorem}\label{the:ECFR}
%Under our system model (see Section~\ref{sec:model}), there exists a constant $\alpha$, such that under $\alpha$-ECFR scheme, any round satisfies honesty with high probability, given that all previous rounds satisfy honesty.
%\end{theorem}

%The proof of the theorem is deferred to Appendix~\ref{app:ECFR}.

\subsection{Issue \#4: How to process intra-shard transactions?}\label{sec:issue4}

Based on the \textit{UTXO} model, a transaction specifies one or more outputs of previous transactions as its input(s). For a transaction $\mathsf{tx}$ charged by some committee $C$, if all its referenced transactions are also maintained in $C$, then $\mathsf{tx}$ is known as an \textit{intra-shard transaction}. As a result, committee members can verify the intra-shard transactions by themselves and reach a consensus afterwards. There are a lot of options for the specific consensus algorithm. On one hand, each committee can be regarded as a smaller blockchain system with less nodes and transactions. Thus, the typical blockchain consensus algorithms can be applied here, such as Proof of Work (PoW), Proof of Stake (PoS) and so on. The discussion on these consensus protocols is out of the scope of this paper. More details can be found in~\cite{abs-2001-07091}.

On the other hand, several protocols which are not suitable for the large-scale blockchain systems find their place in this scenario. A typical example is the classic Byzantine Fault Tolerance (BFT) consensus protocol. As the first practical solution, \cite{CastroL99} correctly survives Byzantine faults in asynchronous networks, allowing a group of nodes to reach agreement on some value. However, it only works in the setting where all participants are known to each other, and the network size is small. Specifically, Such solution does not scale well because of its communication overhead. The protocol creates $O(n^2)$ communication complexity, where $n$ is the number of nodes, which becomes unacceptable when the size of the network increases to, for example, 10,000. Fortunately, these problems are mitigated when applied to each committee in a sharding design. Here are two main reasons, (1) It is easier for nodes to know each other in a smaller committee, and (2) the size of a committee is small enough to enable the BFT protocols to work well. As a result, the BFT protocols have been widely used in the sharding blockchain systems to reach intra-committee consensus.

\subsection{Issue \#5: How to process cross-shard transactions?}\label{sec:issue5}

In comparison to an intra-shard transaction, for a transaction $\mathsf{tx}$ charged by committee $C$, if $\mathsf{tx}$ has a referenced input transaction which is not maintained by $C$, then $\mathsf{tx}$ is called a \textit{cross-shard transaction}. Facing this kind of transactions, members of $C$ have to ask other committees for the validity of the inputs.

Existing studies give some solutions to this issue. In OmniLedger~\cite{Kokoris-KogiasJ18}, before submitting a cross-shard transaction to its assigned committee $C$, the user who generates it needs to collect \textit{proof-of-acceptance} from all referenced committees, whose members process the transaction and update the \textit{UTXOs} state. RapidChain~\cite{ZamaniM018} transforms the cross-shard transaction into intra-shard transactions. For each input $i$ outside $C$, a bridge transaction $tx'$ is created, which takes $i$ as input, and generates an output with the same value as $i$. However, $tx'$ is always assigned to $C$. Consequently, the leader of $C$ replaces the original cross-shard transaction by an intra-shard transaction which takes the outputs of all bridge transactions as inputs. CycLedger~\cite{ZhangLCCD20} transforms the cross-committee consensus into intra-committee consensus by having the leader of $C$ discussing on the relevant inputs with its belonged committee, and comes up with a consistent result for all relevant committees. Such solution resembles a two-phase commit (2PC), but in a decentralized manner.

%\begin{figure}[!t]
%    \centering
%    \includegraphics[width=2.5in]{Figure/arg_inputs_num2.png}
%    \caption{Average number of inputs per transaction over past ten years.}
%    \label{fig:arg_inputs_num}
%\end{figure}

\begin{figure}[!t]
    \centering
    \includegraphics[width=3.4in]{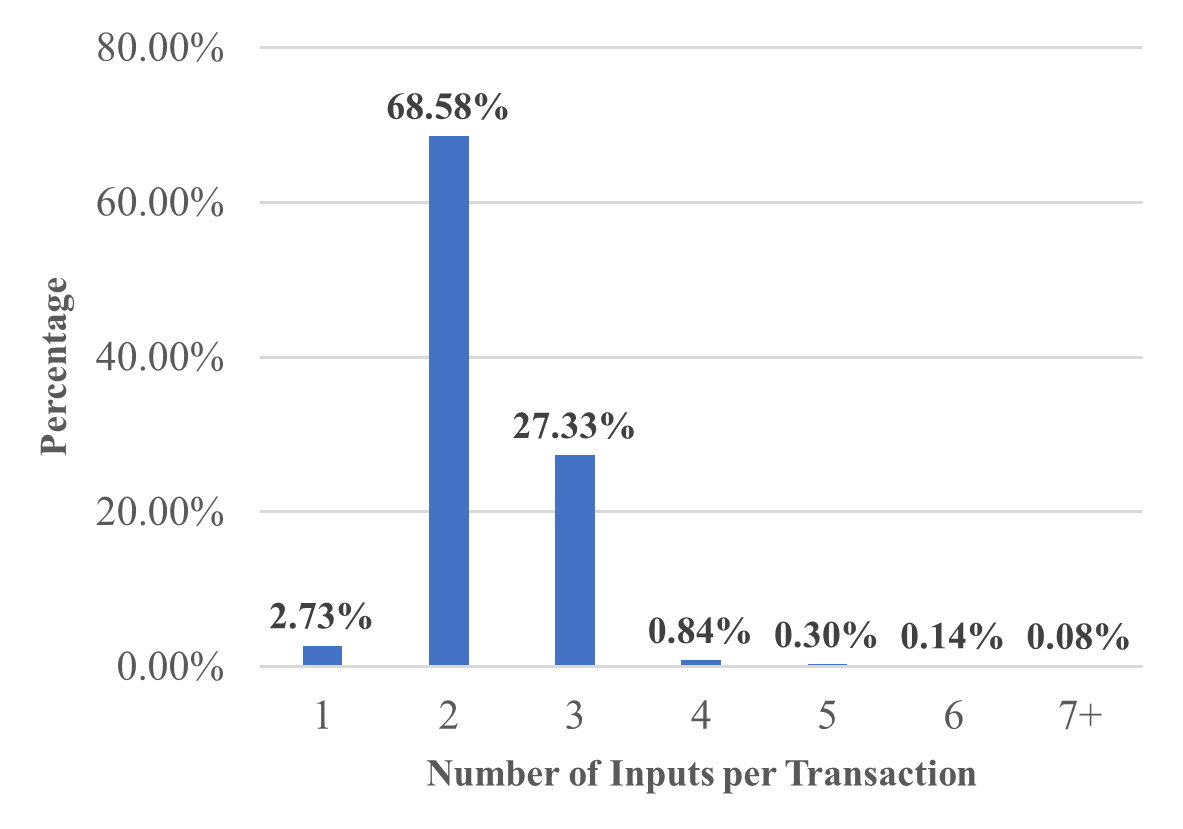}
    \caption{Percentage of the number of transactions' inputs in Bitcoin in the past decade.}
    \label{fig:inputs_num_cdf}
\end{figure}

%\begin{figure}[!t]
%    \centering
%    \includegraphics[width=2.7in]{Figure/per_cross_shard_txs.png}
%    \caption{Percentage of cross-shard transactions when the number of shards is between 10 to 100.}
%    \label{fig:percentage_cross_txs}
%\end{figure}

At last, we remark that cross-shard transactions bring extra communication overheads, which is one of the main disadvantages of sharding blockchains. We conduct several numerical studies to support this idea. According to the data from \textit{Bitcoin Visuals}\footnote{\url{https://bitcoinvisuals.com/}}, which runs a fully-validating bitcoin node and provides extensive statistics over the network, we visualize the information of transaction inputs over the past ten years. \figurename~\ref{fig:inputs_num_cdf} shows the percentages regarding the number of transactions' inputs in Bitcoin in the past decade. As can be seen, most transactions have two or three inputs. Furthermore, we download the full information of transactions from \textit{Blockchair}\footnote{\url{https://blockchair.com/}}. As \figurename~\ref{fig:frac_cross_txs} shows, when splitting transactions randomly into 20 shards, approximately 96\% of them are cross-shard transactions. When we further increase the shard number to 100, such fraction ascends correspondingly to above 99\%.

\section{System Model}\label{sec:model}

\subsection{Notation}
The blockchain system works in rounds. In each round $r$, suppose there are $n$ nodes in the network\footnote{In practice, $n$ is changing with $r$. However, we use $n$ instead of $n^r$, for simplicity.} and each of them has a reputation $w^r_1, w^r_2, \cdots, w^r_n$ which develops as the protocol executes. A unique \textit{referee committee} $C_R^r$ is selected in round $r - 1$ to manage nodes' identities and propose round $r$'s block $B^r$. At the same time, all other nodes are partitioned into $m$ \textit{common committees} which we denote as $C_1^r, C_2^r, \cdots, C_m^r$. In our protocol, all committees have the same size $c = \Theta(\log^2 n)$. Therefore we have $n = (m + 1)\cdot c$. Each common committee $C_i^r (1 \leq i \leq m)$ includes a \textit{leader} $l_i^r$, $\lambda$ partial set members and $c - \lambda - 1$ \textit{common members}. Partial set members in $C_i^r (1 \leq i \leq m)$ form the partial set of the committee, which is denoted as $C_{i, par}^r = \{c_{i, 1}^r, c_{i, 2}^r, \cdots, c_{i, \lambda}^r\}$. Readers can refer to \figurename \ref{fig:hie_stru} for the hierarchical configuration of the network. In the following sections, we omit the superscript of $r$ in notation if there is no ambiguity. %It is ensured that there is at least one non-faulty node in each partial set. In practice, $\lambda$ can be an appropriate value, like 40. 
 
%Besides, with overwhelming probability, each round is successfully terminated (\textit{i.e.}, all expected operations are finished) within a fixed time $T$.

\begin{figure}[!t]
    \centering
    \includegraphics[width=3.4in]{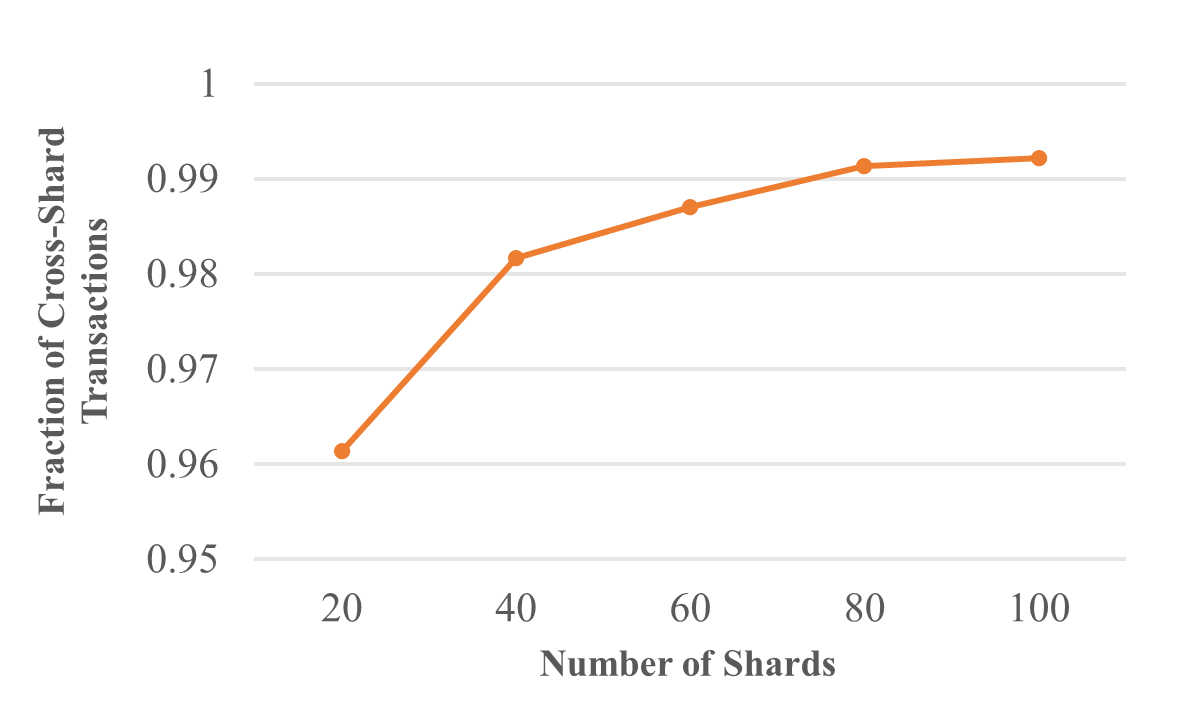}
    \caption{Fraction of cross-shard transactions when the number of shards is between 20 to 100 for transactions in Bitcoin.}
    \label{fig:frac_cross_txs}
\end{figure}

\subsection{Network Model}

%We assume the good connection within a committee. Meanwhile, all leaders and partial set members are linked. Furthermore, we suppose that each leader or partial set member is connected with the whole referee committee $C_R$. This requires a far less amount of reliable connection channels than other works~\cite{abs-1901-05741,Kokoris-KogiasJ18,LuuNZBGS16,ZamaniM018} in which they require a good connection among all honest nodes. Meanwhile, like existing works~\cite{Kokoris-KogiasJ18,ZamaniM018}, we assume synchronous communication within committees  (\textit{i.e.}, the delay of transporting every message is within some $\Delta$) which is realistic in real-world as a committee only consists of several hundred nodes. Meanwhile, all leaders and partial set members are synchronously linked, however, with a larger time delay of $\Gamma$. Concerning other connections, we only need to assume partially-synchronous channels~\cite{cachin2011introduction,CastroL99,Kokoris-KogiasJ18}.

We assume the well connection within a committee. Meanwhile, all leaders and partial set members are linked. Furthermore, we suppose that each leader or partial set member is connected with the whole referee committee $C_R$. Such assumption requires a far less amount of reliable connection channels than other works~\cite{abs-1901-05741,Kokoris-KogiasJ18,LuuNZBGS16,ZamaniM018} in which they require a reliable connection among all honest nodes. We assume synchronous communication within committees  which is realistic as a committee only consists of several hundred nodes. Meanwhile, all leaders and partial set members are synchronously linked. Concerning other connections, we only need to assume partially-synchronous channels~\cite{cachin2011introduction,CastroL99,Kokoris-KogiasJ18}.

\subsection{Threat Model}

We use a Public-Key Infrastructure (PKI) to give each node a public/secret key pair ($PK, SK$), which enables the digital signature scheme in communication. We assume the existence of a probabilistic polynomial-time \textit{Adversary} which controls a $f < 1/3$ fraction of total nodes. \textit{Corrupted} nodes may collude and act out arbitrary behaviors. The adversary can change the order of messages sent by non-faulty nodes under the restriction given in our network model. Other nodes, known as \textit{honest} nodes, always follow the protocol and do nothing outside the regulation. At the same time, we suppose the adversary to be mildly-adaptive, such that after appointing a corruption set, the set of nodes become malicious after $d$ rounds, where $d \geq 1$. Also, we assume all nodes in the network have access to an external random oracle $H$ which is collision-resistant.
%as well as a Verifiable Random Function (VRF) scheme~\cite{MicaliRV99}.

\subsection{Problem Definition}\label{sec:mod_prob}

%We assume that a large set of transactions are continuously sent to our network by external \textit{users}. Users are almost equally divided into $m$ shards. The status of each shard, including the users' identity and Unspent Transaction Outputs (UTXOs), is maintained by the corresponding committee. All processors have access to an authentication function $V$ to verify whether a transaction is legitimate, \textit{e.g.}, the sum of all inputs of the transaction is no less than the sum of all outputs and there is no double-spending. 

We assume that transactions are continuously sent to our network by the \textit{users}. All transactions and Unspent Transaction Outputs (UTXOs), are divided to $m$ shards, and maintained by the corresponding common committee. Besides, all nodes can verify whether a transaction is valid.

As discussed in section~\ref{sec:issue4}, most sharding blockchains adopt the classic BFT algorithm to reach consensus in a committee, in which the leader is a key role. Cross-shard transactions make it inevitable to carry out large amount of communications among different committees. This, unfortunately, puts huge strain on the leader of each committee and makes the leader a bottleneck of the system efficiency. In this work, we focus on this problem, aiming to improve the performance of sharding blockchain protocols.
\section{Solution}\label{sec:solution}

In this section, we propose our solution to the problem we discussed in Section~\ref{sec:mod_prob}. Specifically, in Section~\ref{sec:sol_rep}, we introduce a reputation mechanism to select leaders and provide explicit incentive for nodes in the system. In Section~\ref{sec:sol_rec}, we present a recovery procedure to ensure the safety and liveness properties. 
%Further, in Section~\ref{sec:sol_over}, we outline a sharding blockchain which apply the two schemes we designed.

%figure out those nodes with higher honest computational resources
% We assume the computing resources of nodes are different.

\subsection{Leader Selection via Reputation}\label{sec:sol_rep}
In a sharding blockchain, nodes in each committee are led by the leader and work together to process transactions. In our design, we introduce the concept of reputation which reflects the honest computational resources of each node and helps to select leaders of each committee. Here, we take the intra-shard transactions processing as an example to go into details about the design of node's reputation.

\subsubsection*{Phase 1: Voting and Decision Making}
%Consider the following scenario: In a committee of $c$ nodes, the leader $l$ collects a subset of intra-shard transactions $TXList$ and broadcasts it to everyone in the committee. 
 
%After receiving $TXList$, everyone gives its opinion on the validity of listed transactions. More specifically, the node votes \textit{Yes} for those transactions it agrees, \textit{No} for disagreed and \textit{Unknown} for the left. %When a node fails to judge a transaction within the given time, it should vote \textit{Unknown}. 
%Afterwards, each node forwards its voting list back to the leader. For the nodes who fail to reply within a certain time (\textit{e.g.} $6\Delta$), they are deemed as voting \textit{Unknown} on all transactions. This is to avoid malicious nodes from indefinitely delaying.
 
%With all voting lists, first the leader follows the majority rule and picks up the set of transactions with more than $\frac{c}{2}$ \textit{Yes}, which is named as $TXdecSET$. Then it integrates everyone's vote in a set named $VList$. At last the leader runs PBFT within this committee to reach consensus on both $TXdecSET$ and $VList$. 

Consider the following scenario: In a committee of $c$ nodes, the leader $l$ collects a subset of intra-shard transactions $TXList$ and broadcasts it to everyone in the committee. 
 
After receiving $TXList$, everyone gives its opinion on the validity of listed transactions. More specifically, the node votes \textit{Yes} for those transactions it agrees, \textit{No} for disagreed and \textit{Unknown} for the left\footnote{If a malicious node refuses to give any opinion, it is presumed to have voted \textit{Unknown}. Meanwhile, when an honest node fails to judge a transaction within the given time, it should vote \textit{Unknown}.}.
Afterwards, each node forwards its voting list back to the leader. For the nodes who fail to reply within a certain predetermined time, they are deemed as voting \textit{Unknown} on all transactions. This is to avoid malicious nodes from indefinitely delaying.
 
With all voting lists, first the leader follows the majority rule and picks up the set of transactions with more than $\frac{c}{2}$ \textit{Yes}, which is denoted as $TXdecSET$. Then it integrates everyone's vote in a set denoted as $VList$. At last the leader runs a BFT protocol (e.g., ~\cite{CastroL99}) within this committee to reach consensus on both $TXdecSET$ and $VList$. 

\subsubsection*{Phase 2: Reputation Updating}

As previously described, there are $c$ nodes in the committee and $VList$ contains $c$ votes. After each voting, the leader scores all members according to their votes and the final decision.

Suppose the amount of transactions to be determined is $D$, and let $+1$, $-1$ and $0$ represent \textit{Yes}, \textit{No} and \textit{Unknown} respectively. Subsequently, We can use a $D$-dimension vector to indicate a vote, with each entry representing the opinion on the corresponding transaction. Let $\mathbf{v}_i=\{v_{i,k}|k=1,2, \cdots, D\}$ denote the vote of member $i$, where $v_{i,k}$ is the member $i$'s opinion on the $k^{th}$ transaction. One's score is closely related to its voting accuracy. %We use the cosine value of the angle between two vectors to measure the proximity of two corresponding votes. 
Here, we define one's accuracy as the cosine similarity between its voting vector and the resulting vector, which is determined by the majority algorithm and denoted by $\mathbf{u}=\{u_{k}|k=1,2, \cdots, D\}$. Let $s_{i}$ be the member $i$'s score. We have

\begin{equation}
\begin{aligned}
s_{i} &= D \cdot \cos(\mathbf{v}_i,\mathbf{u}) + \mathsf{bonus}_i    \\
      &= D \cdot \frac{\sum_{k=1}^{D}v_{i,k} u_{k}}{\sqrt{\sum_{k=1}^{D}v_{i,k}^2}\sqrt{\sum_{k=1}^{D}u_{k}^2}} + \mathsf{bonus}_i.  
\end{aligned}
\label{eq:score}
\end{equation}

As \eqref{eq:score} shows, the score function contains two parts. The first part ensures that one's score is positively associated with the number of transactions processed in this voting and its accuracy. The second part is an extra bonus to award those nodes whose votes are totally the same with the final result, \textit{i.e.}, whose accuracy $\cos(\mathbf{v},\mathbf{u})=1$. Significantly, the bonus is a bit different for the leader and other members and it is designed as follows:
\begin{equation}
        \mathsf{bonus}_i = \left\{
            \begin{array}{rcl}
                \sigma \cdot D - \omega/D,       && {\cos(\mathbf{v}_i,\mathbf{u})=1,\ i \text{\ is\ a\ leader};}\\
                \sigma \cdot D, && {\cos(\mathbf{v}_i,\mathbf{u})=1,\ i\text{\ is\ a\ member};}\\
                0,       && {\cos(\mathbf{v}_i,\mathbf{u}) < 1.}\\
            \end{array} 
        \right.
        \label{eq:bonus}
\end{equation}
where $D$ is the number of transactions to be voted, $\sigma, \omega > 0$ are two predetermined parameters. There are two key points on the design of bonus function here:
\begin{itemize}
    \item A perfect non-leader member gains more bonus than the leader. It is based on the consideration that the member may own higher honest computation resources, yet covered up by the leader who sets an upper line on the number of processed transactions of the committee. A higher bonus creates the possibility for the member to stand out.
    \item The gain difference between a perfect non-leader member and the leader on reputation gets smaller with the rise on the number of processed transactions. As a higher number of processed transactions implies the better ability of a leader, it is reasonable to lessen the possibility that the well-performed leader gets replaced.
\end{itemize}

%Owing to the random generation of each committee and the consensus scheme inside a committee, the majority result reflects truthful members' views. The main idea here is, the closer one's opinion stands with the consensus, the higher the score it will get. Concretely, if a member has the same answers with the consentaneous results, it would be rewarded the highest score, \textit{i.e.}, $+1$. On the contrary, if a member replies with completely opposing opinions, it will face a loss of $-1$ in reputation.

After calculating all scores, the leader assembles them into a $ScoreList = \{s_{1}, s_{2}, \cdots, s_{c}\}$, and broadcasts it with $VList$ to all members, waiting for the consensus. In this process, each non-faulty member should sign on the $ScoreList$. If successful, the leader sends the agreement to $C_R^r$, together with relevant certification. Consequently, $C_R^r$ updates their reputation by simply adding the listed score.

%\subsubsection*{Phase 3: Incentive on Reputation}
\subsubsection*{Phase 3: Discussion on Incentive}

As shown in~(\ref{eq:score}), as long as a node has higher honest computational resources, it will give a more similar vote with the consensus, thus winning a higher score in each round. Therefore, a node's reputation, or the accumulated score across rounds, is a good reflection of its honest computational resources.

In our design, reputation offers a reference for the leader selection. At the end of any round, $m$ nodes with the highest reputation are selected as leaders of the next round. These nodes are randomly assigned to $m$ committees. As the reputation reflects the honest computational resources one node contributes, such method enhances the performance and throughput of the system.
 
In each round, the profit of nodes is also determined by their reputation. Considering that one's reputation may be negative, we first map the reputation to a positive number using a monotone function $g(\cdot)$, as (\ref{eq:monotone}) shows. Rewards are then distributed proportionally to the mapped value. Such scheme ensures that whoever works more gets more, thus providing enough incentive for nodes to work honestly and as hard as they can.
\begin{equation}
        g(x) = \left\{
            \begin{array}{rcl}
                e^x,       && {x \leq 0;}\\
                1 + \ln(x+1),       && {x > 0.}
            \end{array} 
        \right.
        \label{eq:monotone}
\end{equation}

We also give a punishment mechanism for the misbehavior. For a leader who violates the protocol, %will face a serious penalty. Once a leader is confirmed to commit a fault by the referee committee, his/her 
its reputation will be decreased to the cube root. %Recall that all leaders are those nodes with the highest reputation. We believe that the reputation of each leader is larger than 0, including malicious ones. 
Combining the punishment with \eqref{eq:monotone}, the mapped value, which is closely related to the node's revenue, will reduce to approximately one-third of the original mapped value. Therefore, the higher the reputation a leader has, the stronger the punishment it will suffer when behaving maliciously. 
%In the following section, we will detail the method to find malicious node.

\subsection{Recovery Procedure}\label{sec:sol_rec}

As shown above, the leader of each committee is selected according to reputation. Such scheme enhances the efficiency of the whole protocol by placing computationally-intensive nodes in high-load positions. Nevertheless, it is risky, as malicious nodes with high computational resources may be elected as leaders, by pretending to be honest in early epochs and accumulating high reputation. Therefore, we propose the following recovery procedure to make sure that malicious leaders will not affect the safety and liveness of the protocol. More precisely, we introduce partial set members to always monitoring the behavior of leaders. Once a leader is confirmed to be malicious, it will be evicted by the referee committee collaboratively, and a new leader will be selected.
%To supervise the leaders and enable the system work well even though there exists misbehaving leaders, we propose the following recovery procedure. evict a misbehaving leader and re-select a new one.

\subsubsection{Partial Set and Referee Committee}
%In each committee, $\lambda$ nodes are set to supervise the leader's behavior. These nodes collaboratively is known as the partial set. In each round, the partial set of a committee is selected at random. For security, it is required that at least one node is honest in each partial set. Here, the size $\lambda$ is a parameter. In practice, $\lambda$ can be an appropriate value, like 40.

%Except for the $m$ parallel processing committees, there is a special \textit{Referee Committee} in the system. It's pointed out that nodes in the referee committee are peers. There is neither leader nor partial set. It acts as a mediator to arbitrate between opposing parties. In particular, the referee committee will take action and deal with problems when a certain leader is accused of behaving maliciously. The detailed approach will be given in the following sections.

In each committee, $\lambda$ nodes are set to supervise the leader's behavior. These nodes collaboratively is known as the \textit{partial set}. In each round, the partial set of a committee is selected uniformly at random. For safety, it is required that at least one node is honest in each partial set. In practice, $\lambda$ can be fixed to an appropriate value, like $40$.

Except for the $m$ parallel processing committees, there is a special \textit{referee committee} in the system. It is worth noting that nodes in the referee committee are equal, in the sense that there is neither leader nor partial set in this committee. The referee committee acts as a mediator to arbitrate between opposing parties. Furthermore, it will take action when a certain leader is accused of behaving maliciously. The detailed approach will be given in the following sections.

\figurename~\ref{fig:hie_stru} demonstrates the hierarchical structure of different nodes and committees.

\begin{figure}[!t]
\centering
\includegraphics[width=3.4in]{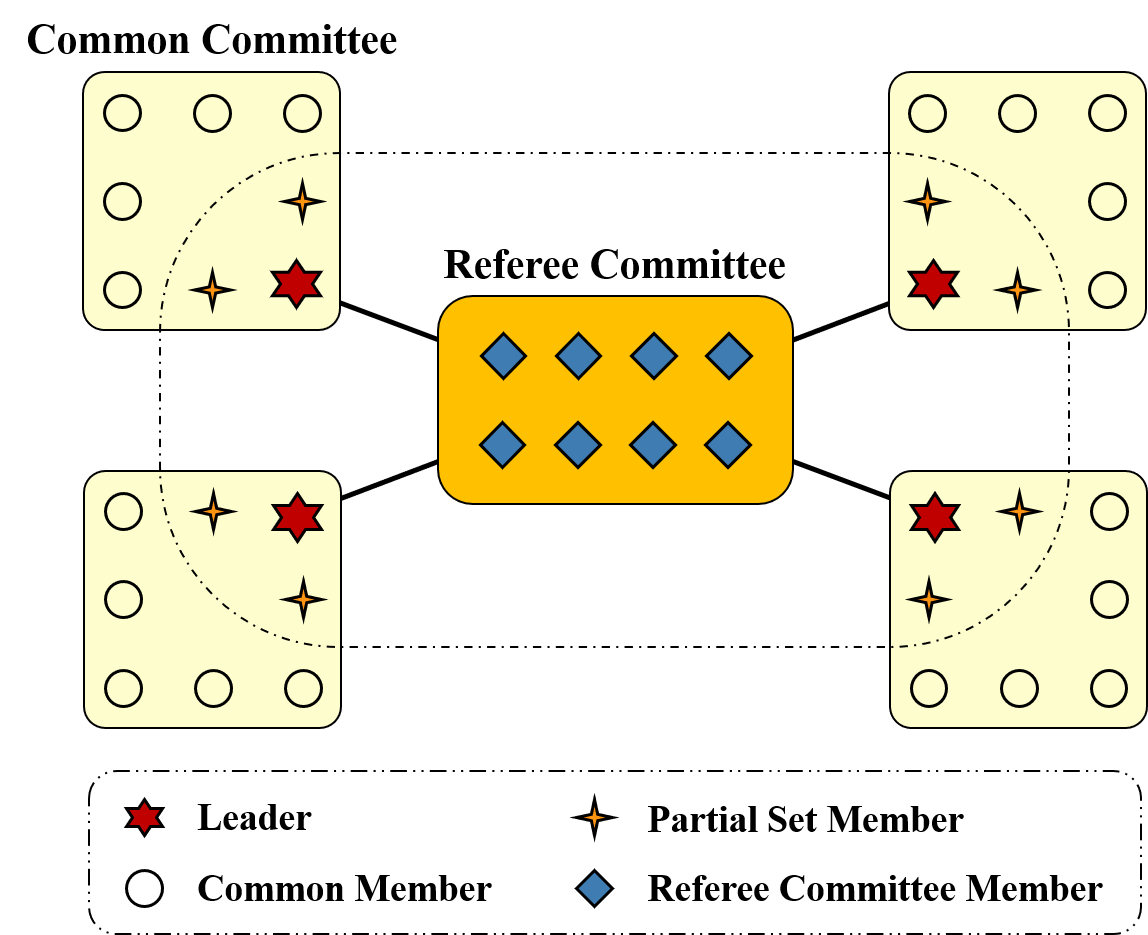}
\caption{Hierarchical structure of the committees. All nodes in a rectangle are fully connected. More specifically, it contains two cases: (a) all nodes in a committee, and (b) all leaders, partial set members and the referee committee.}
\label{fig:hie_stru}
\end{figure}

\subsubsection{Semi-Commitment Exchanging}

In this section, we propose a \textit{semi-commitment exchanging scheme} to prevent a leader from cheating on the member list. Together with the honest partial set members, a malicious leader cannot forge consensus results when communicating with other committees.
%help to constrain leader's behavior and identify malicious leaders

The semi-commitment of a committee is the hash of the member list. In the context of cryptography, a commitment scheme has both hiding property and binding property. Here, we only require the computational-binding property of a commitment scheme. That is where the name "semi-commitment" comes from\footnote{The hiding property of the commitment is not necessary here. For a vicious leader, even if it figures out the members of another committee in the semi-commitment exchanging phase, it can do nothing under our threat model as the adversary cannot get control of a trusty node immediately.}. The semi-commitment exchanging phase runs as follows. 
\begin{enumerate}
    \item To start with, each committee’s leader $l_k$ should unite the member list $S = \{PK_{k,1}, PK_{k,2}, \cdots\ PK_{k,c}\}$ from all key members in the committee, and compute the committee’s semi-commitment via the external hash function $H$: $H_k = H(S).$ Then it broadcasts the semi commitment together with the member list to everyone in the referee committee $C_R$. To prevent any means of cheating, it also delivers both to the partial set $C_{k, par}$.
    \item After all participants in $C_R$ receives semi-commitments from all committees, they run an agreement process inside the committee to check (a) any member in any list is registered; (b) all semi-commitments are valid. They transmit the set of valid semi-commitments to all leaders and partial set members and expel the cheating leaders afterwards.
    \item When a partial set member $c_{k, i}$ gets the semi-commitment $H_k$ from $C_r$, it verifies whether its committee’s semi-commitment corresponds with the member list $S$ it receives from the leader. Once a truthful partial set member notices a mismatch, it reports the circumstance to the referee committee to evict the current leader, to be discussed later.
\end{enumerate} 

We show the phase in Algorithm~\ref{alg:semi-com}. For brevity, the verifying process executed by partial set members as well as all digital signatures are omitted. 
\begin{algorithm}[!t]
    \caption{Semi-commitment Exchange}
    \label{alg:semi-com}
    \renewcommand{\algorithmicwhile}{\textbf{upon}}
    \algsetup{indent=2em}
    \begin{algorithmic}[1]
    %\Procedure{Com\_Exchange}{$r, S, \Psi$}\\
    %\REQUIRE \\
    \ENSURE Each committee get a semi-commitment of any other committee.\\
    \medskip
    \textbf{For leader} $l_k$: 
    \STATE $S \leftarrow \{PK_{k,1}, PK_{k,2}, \cdots\ PK_{k,c}\}$
    \STATE $H_k \leftarrow H(S)$
    \STATE $ComList \leftarrow \mathbf{0}$
    \STATE $ConfList \leftarrow \mathbf{0}$
    \smallskip
    \FOR{$rm \in C_R$}
    \STATE \textit{SEND }$(rm\ |\ \mathsf{SEMI\_COM}, H_k, r, S, k)$
    \ENDFOR
    \smallskip
    \FOR{$pm \in C_{k, par}$}
    \STATE \textit{SEND }$(pm\ |\ \mathsf{SEMI\_COM}, H_k, r, S)$
    \ENDFOR
    \smallskip
    \WHILE{\textit{DELIVER }$(rm\ |\ \mathsf{SEMI\_COM}, H_j, r, j, rm)$}
    \STATE $ConfList[j][H_j] \leftarrow ConfList[j][H_j] + 1$
    \IF{$ConfList[j][H_j] > |C_R| / 2$}
        \STATE $ComList[j] \leftarrow H_j$
    \ENDIF
    \ENDWHILE
    \\
    \medskip
    \textbf{For $rm \in C_R$:}
    \WHILE{\textit{DELIVER }$(l_k\ |\ \mathsf{SEMI\_COM}, H_k, r, S, k)$}
    \STATE $SigList \leftarrow $\textit{ CONSENSUS }$(\mathsf{SEMI\_COM}, r, k, l_k, H_k)$
    \FOR{each leader $l$}
    \STATE \textit{SEND }$(l\ |\ \mathsf{SEMI\_COM}, H_k, r, k, SigList)$
    \ENDFOR
    \ENDWHILE
\end{algorithmic}
\end{algorithm} 

\subsubsection{Leader Re-Selection}

In this section, we introduce the leader re-selection procedure. It is invoked when an honest partial set member notices that its leader is malicious or any participant of $C_R$ notice that some leader is vicious. In the semi-commitment case, the event happens when an honest party (a partial set member or $C_R$) discovers inconsistency between the member list and the semi-commitment.

If a partial set member wants to accuse its leader, it would first broadcast a \textit{witness} to all members in the committee and ask them to vote on the impeachment. Here, a witness is a pair of messages $W = (m_l, m_0)$ where $m_l$ should be sent and signed by the leader. We say a witness is \textit{valid} if and only if the pair derives to a dishonest behavior of the leader. (\textit{e.g.}, $m_l$ be the member list that the leader sends, and $m_0$ be the semi-commitment of the committee where $m_0 \neq H(m_l)$.) If the proposal is approved by more than half of the members, the prosecutor will forward the voting result as well as its witness to everyone in the referee committee.

When any node in $C_R$ receives a witness $W$ and a signature list $Cert$ approving the prosecution from a partial set member $pm$ from committee $C_k$, it starts Algorithm~\ref{alg:re-select} to re-select a committee leader. Afterwards, the new leader needs to make a new semi-commitment of the committee via the semi-commitment exchanging scheme (Algorithm~\ref{alg:semi-com}). When a participant of $C_R$ receives the new semi-commitment, it informs every committee leader the new semi-commitment and leader's address, so that cross-shard transaction handling may start safely.

\begin{algorithm}[!t]
    \caption{Leader Re-selection}
    \label{alg:re-select}
    
    \renewcommand{\algorithmicwhile}{\textbf{upon}}
    \begin{algorithmic}[1]
   
    \ENSURE A malicious leader is evicted and a new leader is selected.\\
    \medskip
    \textbf{For $rm \in C_R$:}
    \STATE $SigList \leftarrow $\textit{ CONSENSUS }$(r, k, pm, W, Cert)$ \\
    \COMMENT{$pm$ is the partial set member who accused the leader, with $W$ the witness and $Cert$ the signatures from committee members.}
    \FOR{each $i \in C_k$}
        \STATE \textit{SEND }$(i\ |\ \mathsf{NEW}, pm, SigList)$
    \ENDFOR
    \end{algorithmic} 
\end{algorithm}  

For a better understanding on how to apply the above designs, readers are recommended to~\cite{ZhangLCCD20} for more details on a complete protocol named CycLedger.

\section{Analysis}\label{sec:analysis}
In this section, we provide a comprehensive discussion on the our design, showing that (1) the committee assignment scheme is secure with overwhelming probability, and (2) the transaction processing procedure satisfies safety and liveness.

\subsection{Security on Committee Assignment}\label{sec:sec_com}

\subsubsection{Partial Set}
%\subsection{Security on Partial Sets}
We say \textit{a partial set is secure} when at least one node in the set is honest. As no more than 1/3 validators are faulty, when the size of the partial set is set to $40$, the probability that a partial set is insecure at most:
    $$ (\frac{1}{3}) ^ {40} < 8 \times 10^{-20}. $$

Associated with union bound, when the number of committees is $20$, the probability that at least one partial set is insecure is no more than $2 \times 10^{-19}$.

\subsubsection{Bootstrapping}
%\subsection{Security on Committee Configuration}
We say \textit{a committee is secure} when more than half of nodes are non-faulty. Recall that committees are formed uniformly except leaders. Let $X$ denote the number of malicious nodes in a committee, and $c$ be the expected committee size. We consider the tail bound of hypergeometric distribution which gives the following result:

\begin{equation}
    \Pr[X \geq \frac{c}{2}] = \sum_{x = \frac{c}{2}}^c \frac{\binom{t}{x}\binom{n - t}{c - x}}{\binom{n}{c}} \leq e^{-D(\frac{1}{2}||f)c},
\end{equation}
where $D(\cdot||\cdot)$ is the Kullback-Leibler divergence. Here $t < \frac{n}{3}$ and $f < \frac{1}{3} + \frac{1}{c}$, thus,

\begin{equation}
    \Pr[X \geq \frac{c}{2}] \leq e^{-\frac{c}{12}}.
    \label{eq:committee_error}
\end{equation}

When the expected committee size is $c = \Theta(\log^2n)$, we derive that the probability that a committee is insecure is less than $n^{\frac{-\log n}{12}}$, which is negligible of $n$.

\figurename~\ref{fig:pro_fail} visualizes \eqref{eq:committee_error}. Namely, it shows the probability of failure calculated using the hypergeometric distribution to uniformly sample a committee when the population of the whole network is 4,000. The amount of malicious nodes is 1,333, exactly less than one-third of the size of the network.

\begin{figure}[t]
    \centering
    \includegraphics[width=3.4in]{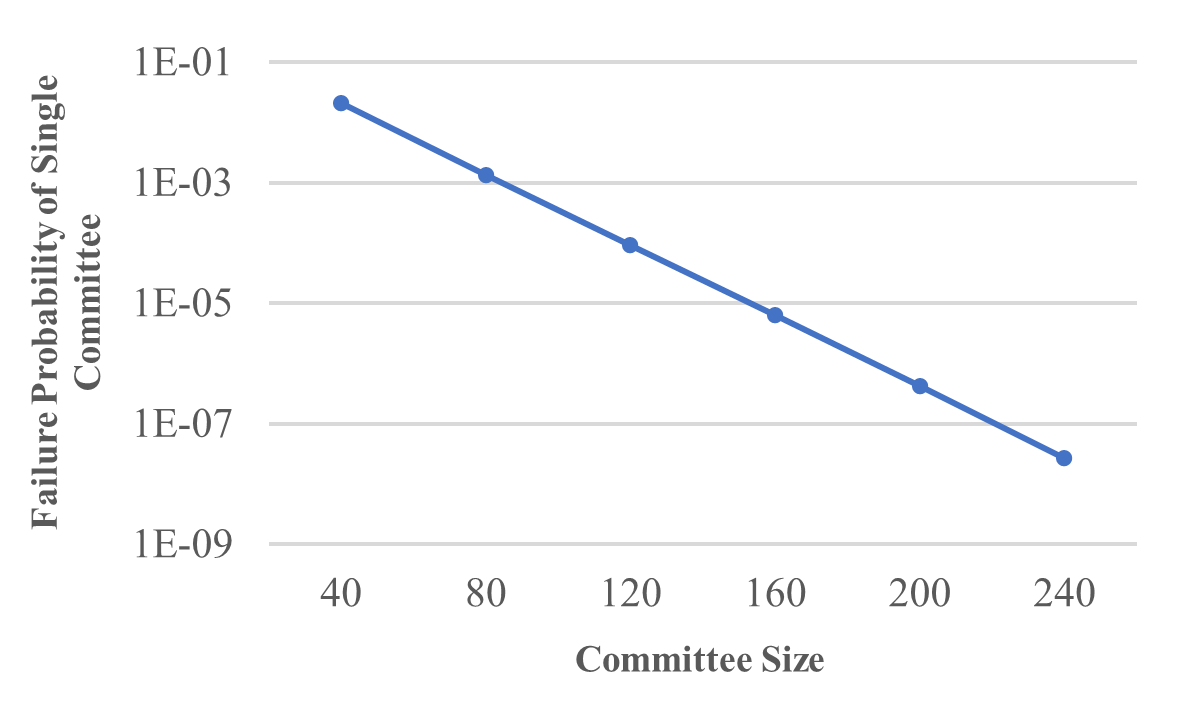}
    \caption{Probability of failure in sampling one committee from a population of 4,000 nodes. The amount of malicious nodes is set to 1,333.}
    \label{fig:pro_fail}
\end{figure}

Particularly, when $c = 240$, the error probability for a single committee is less than $2.8 \times 10^{-8}$. Applying union bound, when $m$ is less than 20, the error probability is no more than $6 \times 10 ^{-7}$.

\subsubsection{Reshuffling}\label{sec:reshuffling}
For the simplicity of expression, we say a round of a sharding protocol is \emph{secure} if all committees in this round are secure. As we have discussed in Section~\ref{sec:issue3}, we use $\alpha$-ECFR scheme to reshuffle the nodes at the beginning of each round. The following theorem shows that after reshuffling with $\alpha$-ECFR scheme, the round is secure with high probability.

\begin{theorem}\label{the:ECFR}
Assume that the adversary controls $f < 1/3$ fraction of nodes, and the corruption requires $d\ (d \geq 1)$ rounds to take effect. Then, there exists a constant $\alpha$, such that under $\alpha$-ECFR scheme, any round is secure with high probability, given that all previous rounds are secure.
\end{theorem}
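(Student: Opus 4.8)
The plan is to fix an arbitrary committee $C_i^r$ produced by the round-$r$ reshuffle, prove that the number of its malicious members stays below $c/2$ except with probability $n^{-\Theta(\log n)}$, and then union-bound over the $m+1 = \Theta(n/\log^2 n)$ committees to conclude that the whole round is secure. The structural fact I would lean on is that the adversary is mildly-adaptive with delay $d \geq 1$: the set $M^r$ of nodes malicious in round $r$ is determined by corruption decisions taken no later than round $r-1$, hence \emph{before} the fresh randomness driving the round-$r$ marking and reassignment is drawn. Conditioning on round $r-1$ (the inductive hypothesis, which also guarantees the equal-size committee configuration needed for the reshuffle to be well defined) and on $M^r$, the round-$r$ reshuffle is independent of which nodes are corrupt, and this independence is precisely what lets concentration take over.

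Next I would split the malicious members of $C_i^r$ according to the reshuffle. The \emph{stayed} malicious nodes are the unmarked members of $C_i^{r-1}$; since each node is marked independently with probability $\alpha$, the total number of unmarked members of $C_i^{r-1}$ is $\mathrm{Bin}(c, 1-\alpha)$, with mean $(1-\alpha)c$, and this stochastically upper-bounds the stayed malicious count even in the pessimistic case where every member of $C_i^{r-1}$ has been corrupted (note this worst case is admissible: a single committee has only $c \ll fn$ nodes, so corrupting it entirely does not exhaust the budget). The \emph{incoming} malicious nodes are the marked nodes reassigned into $C_i^r$; since assignment is uniform and the marked pool contains at most an $f$-fraction of malicious nodes (the global threat-model bound), their number is governed by a hypergeometric law with mean at most $\alpha f c$, to which I would apply the Kullback--Leibler tail bound for the hypergeometric distribution already used in the bootstrapping analysis.

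Summing the two contributions, the expected number of malicious members of $C_i^r$ is at most $c\big[(1-\alpha) + \alpha f\big] = c\big[1 - \alpha(1-f)\big]$. The decisive observation is that $f < 1/3$ forces $\tfrac{1}{2(1-f)} < \tfrac34 < 1$, so any constant $\alpha \in \big(\tfrac{1}{2(1-f)},\,1\big)$ makes this mean at most $(\tfrac12 - \delta)c$ with a fixed gap $\delta = \alpha(1-f) - \tfrac12 > 0$. A constant-factor gap $\delta c$ below the threshold $c/2$ turns the Binomial and hypergeometric tail bounds into a single-committee failure probability of $e^{-\Theta(c)}$; substituting $c = \Theta(\log^2 n)$ gives $n^{-\Theta(\log n)}$, which remains negligible in $n$ after the union bound over $\Theta(n/\log^2 n)$ committees.

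I expect the main obstacle to be the rigorous justification of the independence step together with the mild dependence introduced by the equal-size constraint. Because marked nodes are redistributed so as to keep every committee at size exactly $c$, the incoming nodes are sampled without replacement from the global marked pool, and the number of vacated slots in $C_i^r$ is itself the random quantity $c - \mathrm{Bin}(c,1-\alpha)$; so the stayed/incoming split is not a sum of independent indicators, which is why I would route the incoming estimate through the hypergeometric tail bound rather than a naive Chernoff argument and would verify that the $\Theta(c)$ slack comfortably absorbs the $O(\sqrt{c})$ fluctuation in the slot count. The remaining bookkeeping is to pin down the admissible range of $\alpha$ as an explicit function of $f$ and to check it is nonempty for every $f < 1/3$, and to confirm that taking the worst case $d=1$ (which the reshuffle, not the delay, must defeat) is what dictates the threshold $\alpha > \tfrac{1}{2(1-f)}$.
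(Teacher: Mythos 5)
Your proposal is correct, but it takes a genuinely different route from the paper's proof. The paper analyzes the dispersal accumulated over all $d$ rounds between the adversary's commitment and its taking effect: it calls a node black if it is marked at least once in those $d$ rounds, proves via a sequential conditional Chernoff argument (Lemma~\ref{lem:ECFR1}) that with high probability at most a $\beta = (1-\alpha^2)^d$ fraction of every committee stays white, invokes the uniformity of black nodes (Lemma~\ref{lem:ECFR2}), and finishes with a two-stage hypergeometric tail bound and the condition $\beta + f(1+\beta)^2(1-\beta) < 1/2$. You instead collapse everything onto the \emph{last} reshuffle: mild adaptivity with $d \geq 1$ makes the malicious set $M^r$ measurable with respect to the history through round $r-1$, hence independent of the round-$r$ marking and reassignment randomness, so a single application of $\alpha$-ECFR already dilutes the worst-case concentration (all of $C_i^{r-1}$ corrupted), yielding the explicit threshold $\alpha > \frac{1}{2(1-f)}$, which is nonempty for every $f < 1/3$. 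For $d = 1$ the two decompositions actually coincide (your stayed/incoming split is exactly the paper's white/black split), and your version is both cleaner and quantitatively better: any constant $\alpha > 3/4$ works, whereas the paper's route needs $1-\alpha^2 \leq \beta \approx 1/8$, i.e.\ approximately $\alpha \geq 0.93$. What your argument gives up is the dependence on $d$: the paper's condition $(1-\alpha^2)^d \leq \beta$ relaxes as $d$ grows, so a slower adversary permits much smaller per-round reshuffling --- which matters practically, since the entire point of ECFR is to avoid near-total reshuffling --- whereas your threshold stays above $\frac{1}{2(1-f)} > 1/2$ no matter how large $d$ is. Both arguments suffice for the theorem as stated (existence of a constant $\alpha$), and yours has the side benefit of not needing the security of previous rounds, only the well-formedness of the round-$(r-1)$ configuration. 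One small imprecision to repair in a write-up: the malicious fraction of the marked pool is not deterministically at most $f$; it is a ratio of two correlated binomials and is only at most $f(1+\epsilon)$ with probability $1 - e^{-\Theta(\epsilon^2 n)}$, so this step needs its own Chernoff bound, with the resulting $\epsilon$-slack absorbed (as you anticipate for the slot-count fluctuation) by the constant gap $\delta = \alpha(1-f) - \frac{1}{2}$.
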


\begin{proof}
%Recall that under our threat model (see Section~\ref{sec:model}), any corruption attempt by the adversary will go into effect after $d$ rounds. 
To simplify the proof, we suppose that the adversary specifies a set of nodes to corrupt by the end of round $r\ (r > 1)$ (before it knows the committee configuration of round $r + 1$), and nodes in the set become malicious at the start of round $r + d$. To prove the theorem, we show that no matter how the adversary choose the set, nodes will be sufficiently reshuffled after $d$ rounds so that all committees have an honest majority, with high probability.

Further, in this proof, we call a node \textit{black} if it is ever marked in $d$ rounds of reshuffling, and \textit{white} otherwise. 
%The proof proceeds in the following steps:
%\begin{itemize}
%    \item For any $\beta\in (0, 1)$, there is a constant $\alpha\in (0, 1)$, such that after $d$ rounds of $\alpha$-ECFR, with high probability, for all committees, at most $\beta$ fraction of nodes remain white. (Lemma~\ref{lem:ECFR1})
%    \item Conditioning on the committee configuration of round $r$ and that one has the knowledge of black nodes (but not how these nodes move), the permutation of black nodes is uniform at round $r + d$. (Lemma~\ref{lem:ECFR2})
%    \item There is a $\beta\in (0, 1)$, such that whatever the corrupting strategy is at round $r$, all committees have an honest majority at round $r + d$, with high probability. (Lemma~\ref{lem:ECFR3})
%\end{itemize}

\begin{lemma}\label{lem:ECFR1}    
For any constant $\beta\in (0, 1)$, there is a constant $\alpha\in (0, 1)$, such that after $d$ rounds of $\alpha$-ECFR, with high probability, for all committees, at most $\beta$ fraction of nodes remain white.
\end{lemma}
\begin{proof}
Consider any committee $C$ with size $c = O(\log^2 n)$. Let $X_1, \cdots, X_i,\cdots, X_d$ denote the number of newly-marked nodes in $C$ in round $r + 1, \cdots, r + i, \cdots, r + d$. Let $\beta := (1 - \alpha^2)^d$, and $\gamma_i := \left(1 - (1 - \alpha^2)^i\right), 1 \leq i \leq d$. (Therefore, $\gamma_d = 1 - \beta$). We define the following $d$ events:
\[\mathcal{E}_i := \{X_1 + \cdots + X_i > \gamma_i\cdot c\},\quad 1 \leq i \leq d.\]
%\begin{align*}
%    \mathcal{E}_1 &:= (X_1 > \gamma_1\cdot c), \quad \cdots\\
%    \mathcal{E}_i &:= (X_1 + \cdots + X_i > \gamma_i\cdot c), \quad \cdots\\
%    \mathcal{E}_d &:= (X_1 + \cdots + X_d > \gamma_d\cdot c).
%\end{align*}

First, notice that $E[X_1] = \alpha\cdot c$. By Chernoff-Hoeffding bound, we have
\[
    \Pr[\neg \mathcal{E}_1] = \Pr[X_1 \leq \alpha^2\cdot c] \leq \exp\left\{-\frac{1}{2}\cdot (1 - \alpha)^2\alpha\cdot c\right\}.
\]

For any $1 \leq i < d$ (when $d \geq 2$), We now give an upper bound on $\Pr[\neg \mathcal{E}_{i + 1}]$.
\begin{align*}
    &\quad\ \Pr[\neg \mathcal{E}_{i + 1}] \\
    &= \Pr[\neg \mathcal{E}_{i + 1} \wedge \neg \mathcal{E}_{i}] + \Pr[\neg \mathcal{E}_{i + 1} \wedge \mathcal{E}_{i}] \\
    &\leq \Pr[\neg \mathcal{E}_i] + \sum_{t > \gamma_i\cdot c}\Pr[\neg \mathcal{E}_{i + 1} \wedge (X_1 + \cdots + X_i = t)] \\
    &\leq \Pr[\neg \mathcal{E}_i] + \Pr[\mathcal{E}_i]\cdot \max_{t > \gamma_i\cdot c}\Pr[\neg \mathcal{E}_{i + 1} | X_1 + \cdots + X_i = t] \\
    &\leq \Pr[\neg \mathcal{E}_i] + \max_{t > \gamma_i\cdot c}\Pr[X_{i + 1} \leq \gamma_{i + 1}\cdot c - t | X_1 + \cdots + X_i = t].
\end{align*}

An important property is that $\gamma_{i + 1} - \alpha \leq (1 - \alpha)\gamma_i$, which implies that $\gamma_{i + 1}\cdot c - t \leq \alpha\cdot (c - t)$ for any $t > \gamma_i\cdot c$. As $E[X_{i + 1}] = \alpha\cdot (c - X_1 - \cdots - X_i)$, we can apply Chernoff-Hoeffding bound again:
\begin{align*}
    &\quad\ \Pr[\neg \mathcal{E}_{i + 1}] - \Pr[\neg \mathcal{E}_i]\\
    &\leq \max_{t > \gamma_i\cdot c}\Pr[X_{i + 1} \leq \gamma_{i + 1}\cdot c - t | X_1 + \cdots + X_i = t] \\
    &\leq \max_{t > \gamma_i\cdot c}\exp\left\{-\frac{1}{2}\cdot \left(1 - \frac{\gamma_{i + 1}\cdot c - t}{\alpha\cdot (c - t)}\right)^2\cdot \alpha\cdot (c - t)\right\} \\
    &= \max_{t > \gamma_i\cdot c}\exp\left\{-\frac{1}{2}\cdot \frac{\left((\alpha - \gamma_{i + 1})\cdot c + (1 - \alpha)\cdot t\right)^2}{\alpha\cdot (c - t)}\right\} \\
    &< \exp\left\{-\frac{1}{2}\cdot \frac{\left((\alpha - \gamma_{i + 1})\cdot c + (1 - \alpha)\cdot \gamma_{i}\right)^2}{\alpha\cdot (c - \gamma_{i})}\cdot c\right\} \\
    &= \exp\left\{-\frac{1}{2}\cdot (1 - \alpha)^2\alpha(1-\alpha^2)^{i}\cdot c\right\}.
\end{align*}

Therefore, as $c = \Theta(\log^2 n)$, we have
\begin{align*}
    \Pr[\neg \mathcal{E}_d] &= \Pr[\neg \mathcal{E}_1] + \sum_{i = 1}^{d - 1}(\Pr[\neg \mathcal{E}_{i + 1}] - \Pr[\neg \mathcal{E}_{i}]) \\
    &\leq \sum_{i = 0}^{d - 1} \exp\left\{-\frac{1}{2}\cdot (1 - \alpha)^2\alpha(1-\alpha^2)^{i}\cdot c\right\} \\
    &\leq d\cdot \exp\left\{-\frac{1}{2}\cdot (1 - \alpha)^2\alpha(1-\alpha^2)^{d - 1}\cdot c\right\} \\
    &= d\cdot \exp\left\{-\frac{1}{2}\cdot \frac{(1 - \alpha)\alpha}{1 + \alpha}\beta\cdot c\right\} = \Theta(n^{-\log n}).
\end{align*}

Note that $\mathcal{E}_d$ is the event that at most $\beta$ fraction of nodes in committee $C$ remain white. By a union bound on all $m = \Theta(n/\log^2 n)$ committees, the lemma is proved.
\end{proof}

\begin{lemma}\label{lem:ECFR2}
Conditioning on the committee configuration of round $r$ and the number of black nodes, the identity of these black nodes and their belongings at round $r + d$ is uniform.
\end{lemma}
\begin{proof}
The lemma holds obviously according to the definition of ECFR scheme.
\end{proof}

Now we come back to the main theorem. Let $Y$ be the number of black nodes, and $Z$ be the number of black nodes that are to be corrupted at round $r + d$. By Lemma~\ref{lem:ECFR1}, with high probability, $Y \geq (1 - \beta)\cdot n$.

We simply disregard the negligible failure probability. Due to Lemma~\ref{lem:ECFR2}, $Z$ follows the hypergeomatric distribution $\mathcal{H}(f\cdot n, n, Y)$. Therefore, we have
\begin{align*}
    &\quad \ \Pr[Z \geq f\cdot (1 + \beta)\cdot Y] \\
    &\leq \max_{y\geq (1 - \beta)\cdot n}\Pr[Z\geq f\cdot (1 + \beta)\cdot y | Y = y] \\
    &\leq \exp\left\{-D\left(f\cdot (1 + \beta) \| f\right)\cdot (1 - \beta)\cdot n\right\},
\end{align*}
according to the tail bound of hypergeometric distribution.

Now suppose $Z \leq f\cdot (1 + \beta)\cdot Y$, which happens with high probability according to the previous inequality. For a committee $C$, let $Y_C$ be the number of black nodes in $C$ at round $r + d$, and $Z_C$ be the number of black nodes in $C$ that are malicious at round $r + d$. Again, $Z_C\sim \mathcal{H}(Z, Y, Y_C)$, which leads to
\begin{align*}
    &\quad \ \Pr[Z_C \geq f\cdot (1 + \beta)^2\cdot Y_C] \\ &\leq \max_{y_C\geq (1 - \beta)\cdot c}\Pr[Z_C\geq f\cdot (1 + \beta)^2\cdot y_C | Y_C = y_C] \\
    &\leq \exp\left\{-D\left(f\cdot (1 + \beta)^2\|f\cdot (1 + \beta)\right)\cdot (1 - \beta)\cdot c\right\},
\end{align*}

which is negligible in $n$. Subsequently, we assume $Z_C \leq f\cdot (1 + \beta)^2\cdot Y_C$. Let $M_C$ be the number of malicious nodes in committee $C$ at round $r + d$. Above all, with probability $1 - O(n^{-\log n})$, we have
\begin{align*}
    M_C &\leq \beta \cdot c + (1 - \beta)\cdot f\cdot (1 + \beta)^2 \cdot c \\
    &= \left(\beta + f\cdot (1 + \beta)^2(1 - \beta)\right)\cdot c.
\end{align*}
    
When $f < 1/3$, with appropriate small $\beta$ (\textit{e.g.}, $\beta = 1/8$), we have $\beta + f\cdot (1 + \beta)^2(1 - \beta) < 1/2$. Applying an union bound on all $m = \Theta(n/\log^2 n)$ committees, we obtain the theorem.
\end{proof}
%The proof is deferred to Appendix~\ref{app:ECFR}. Therefore, within polynomial rounds of $n$, all committees in each round is secure with high probability.

\subsection{Safety and Liveness on Transaction Processing}
Safety and liveness are two classes of essential properties in sharding blockchain system, with the following implication respectively:
\begin{itemize}
    \item \textit{Safety.} Each committee will never propose a block with invalid transactions.
    \item \textit{Liveness.} By the end of each round, each committee will propose a non-empty valid block.
\end{itemize}

In this section, we show that our design satisfies both two properties.
%w\subsection{Security on Randomness}

%In CycLedger, we apply the SCRAPE scheme~\cite{CascudoD17} within $C_R$ to distributedly generate a random string. SCRAPE guarantees that as long as the majority of nodes in $C_R$ are honest, the output random string is pseudorandom and unpredictable. At the same time, no leader is required in the execution of SCRAPE. This feature suits the construction of the referee committee well as $C_R$ is the only committee without a leader. As one can see in the following analysis, each committee, including $C_R$, has more than half of non-faulty nodes with high probability, hence, we assert that the randomness produced by SCRAPE with $C_R$ is reliable.

%\subsection{Security on Semi-Commitments}

%To start with, we show that 
%\begin{lemma}
%When $H$ is modeled as a collision-resistant hash function (CRHF), $SEMI\_COM_k^r$ satisfies the computational binding property, \textit{i.e.}, 
%\label{claim:release_com}
%\end{lemma}

\begin{claim}\label{cla:mem}
A malicious leader cannot deceive a trustful leader by forging a member list of its committee as long as the referee committee has an honest majority.
\end{claim}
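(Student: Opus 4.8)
The plan is to split the claim into a binding argument and a consistency argument, and to locate the difficulty in the latter. Write $C_k$ for the malicious leader's committee, $S_0$ for its genuine member list, and $H_0 = H(S_0)$. A trustful leader $l_j$ never sees the raw membership of $C_k$; it only receives, via Algorithm~\ref{alg:semi-com}, the semi-commitment that $C_R$ agrees upon and broadcasts, and it accepts a list $S$ offered later by $l_k$ only if $H(S)$ matches this broadcast value. So ``deceiving $l_j$ with a forged list'' means either (i) making $l_j$ accept a list other than the one $C_R$ committed to, or (ii) getting $C_R$ to commit to a value whose preimage is not $S_0$.

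First I would handle (i) using the honest majority of $C_R$ together with the collision resistance of the external oracle $H$. Since more than half of $C_R$ is honest, the agreement run in Step~2 of the scheme outputs one value $H_k$ and broadcasts the same $H_k$ to every leader, so no two honest leaders $l_j$ and $l_{j'}$ can be handed different commitments for $C_k$. By collision resistance, at most one list $S$ satisfies $H(S) = H_k$ except with negligible probability; hence an honest $l_j$, which verifies every presented list against the broadcast $H_k$, can be made to accept only that unique preimage. This already rules out equivocation and post-hoc substitution: the malicious leader is bound to a single list.

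It remains to argue (ii), that the committed preimage is in fact $S_0$, and this is where I would lean on the partial set. By the partial-set security established earlier, $C_{k, par}$ contains at least one honest member $c$; being a node of the well-connected committee $C_k$, $c$ knows $S_0$. In Step~3, $c$ holds the broadcast $H_k$ and compares it against the membership it knows to be genuine. If $l_k$ committed to a list $S_{CR} \neq S_0$, then $H_k = H(S_{CR}) \neq H_0$ by collision resistance, so $c$ detects the discrepancy, assembles a witness, and invokes the re-selection of Algorithm~\ref{alg:re-select}; because $C_R$ has an honest majority, its consensus processes the accusation correctly and evicts $l_k$ rather than trusting it. Combining the two parts, with probability $1$ minus the negligible collision and partial-set-failure terms, an honest leader is either handed the genuine $S_0$ or the forging leader is removed.

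The step I expect to be the main obstacle is the internally consistent forgery inside (ii): the case where $l_k$ feeds the same fabricated list $S_{CR}$ both to $C_R$ and to the honest supervisor $c$, so that the hash check $H(S_{CR}) = H_k$ passes and the simple mismatch witness $W = (m_l, m_0)$ with $m_0 \neq H(m_l)$ no longer applies. Collision resistance is silent here, and $C_R$'s Step~2 test only certifies that the listed keys are registered, not that they are the correct assignment. I would resolve this by having $c$ detect the fabrication from its own knowledge of the true membership---most cleanly by exhibiting a publicly verifiable assignment certificate for an honest member omitted from $S_{CR}$ (or for $c$ itself, if excluded)---and I would take care to justify both that $c$ possesses such a certificate and that an honest-majority $C_R$ cannot suppress the ensuing accusation.
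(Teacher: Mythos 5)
Your decomposition is the same one the paper uses: its proof consists of exactly your two parts, in reverse order --- one paragraph asserting that partial-set supervision forces the semi-commitment to correspond to the true member list (your part (ii)), and one paragraph invoking collision resistance of $H$ to obtain computational binding, so that no second preimage can be produced after the commitment is released (your part (i)). On part (i) you and the paper agree in substance; your additional observation that the honest majority of $C_R$ prevents two honest leaders from being handed different commitments is implicit in the consensus step of Algorithm~\ref{alg:semi-com} but is worth stating.

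The difference lies in part (ii), and it is to your credit. The paper disposes of it by assertion: the exchange ``is under the supervision of its partial set,'' hence ``each leader cannot lie and the semi-commitment exactly corresponds to the true member list.'' It never addresses the case you single out as the main obstacle --- the internally consistent forgery in which $l_k$ sends the same fabricated list $S_{CR} \neq S_0$ both to $C_R$ and to its partial set. As you observe, none of the explicit checks in Algorithm~\ref{alg:semi-com} catches this: Step 2 verifies only that listed keys are registered, and Step 3 only compares the $H_k$ broadcast by $C_R$ against $H(S)$ for the $S$ received from the leader, a test the consistent forger passes; likewise the example witness $(m_l, m_0)$ with $m_0 \neq H(m_l)$ is unusable here. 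Closing the hole needs exactly what you sketch: the honest supervisor must compare $S_{CR}$ against the membership it knows first-hand (it is itself a member of the well-connected committee) and must hold evidence of the true assignment that $C_R$ will accept as a valid witness --- which is available in this architecture, since $C_R$ itself manages node identities and performs the committee assignment, so an equally natural repair is to strengthen Step 2 so that $C_R$ checks submitted lists against its own assignment records rather than mere registration. In short, your proof follows the paper's skeleton but is strictly more complete on precisely the step you predicted would be hard; the gap you flagged is present in the paper's own argument, not in yours.
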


\begin{proof}
The process of semi-commitment exchanging is under the supervision of its partial set. Note that with high probability the partial set has at least one honest node. Therefore, each leader cannot lie and the semi-commitment exactly corresponds to the true member list.

Owing to the collision-resistance property of the hash function, the semi-commitment of a committee satisfies the computational binding property. After the semi-commitment is released, only with negligible probability, a probabilistic polynomial-time malicious leader can forge a false member list which corresponds to the same semi-commitment. Therefore, the leader cannot provide non-accepted results by falsifying its committee signature.
%There are only two opportunities when a betrayer can lie to a loyalist on the member list. However, because all members in $C_R$, as well as the partial set members of the committee, see the exact member list, any false hash on the list will be perceived. Thus the liar will be detected. The other chance for the malicious leader is when the semi-commitment is revealed. However, according to the Lemma~\ref{claim:release_com}, misleading behavior will not take effect in this phase.
\end{proof}

\begin{claim}\label{cla:com}
A malicious leader is always detected and thus evicted via the leader re-selection procedure, as long as the referee committee has an honest majority.
\end{claim}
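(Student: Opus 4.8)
The plan is to establish detection and eviction separately, reducing each to the security guarantees already proved for partial sets and the referee committee, together with the self-certifying nature of the witness. Throughout I condition on the high-probability event, guaranteed by Section~\ref{sec:sec_com}, that every partial set contains at least one honest node and that each common committee has an honest majority; the hypothesis of the claim already supplies the honest majority of the referee committee $C_R$.

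First I would argue detection. A witness $W = (m_l, m_0)$ is self-verifying: since $m_l$ carries the leader's signature and the predicate ``$W$ is valid'' (e.g.\ $m_0 \neq H(m_l)$ in the semi-commitment case) can be checked by anyone without trusting the accuser, no honest party needs to rely on the prosecutor's honesty. It therefore suffices to exhibit one honest party that observes the leader's deviation. In the semi-commitment case this is immediate from Claim~\ref{cla:mem}: the honest partial set member receives the member list $S$ from the leader and the committee's semi-commitment $H_k$ from $C_R$, and any forgery forces $H_k \neq H(S)$, which it records as $W$. Alternatively, the honest majority of $C_R$ detects an unregistered member or an invalid semi-commitment directly during the exchange. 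More generally, any leader action breaking safety or liveness leaves a signed message $m_l$ which, paired with the value $m_0$ observed by an honest party, constitutes a valid witness.

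Next I would show the accusation propagates and triggers eviction. The honest partial set member broadcasts $W$ inside its committee and calls for a vote; because $W$ is self-verifying and the committee has an honest majority, more than half the members sign, producing the certificate $Cert$. By the network model every partial set member is connected directly to the whole of $C_R$, so the prosecutor forwards $(W, Cert)$ to $C_R$ along a channel the leader cannot intercept. Any honest node of $C_R$ then runs Algorithm~\ref{alg:re-select}; since $C_R$ has an honest majority, its internal consensus on $(r, k, pm, W, Cert)$ succeeds, the malicious leader is evicted, and a fresh leader is installed via the semi-commitment exchange of Algorithm~\ref{alg:semi-com}.

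The hard part will be ruling out suppression and collusion rather than the mechanical steps above. Concretely I must argue that a malicious leader colluding with corrupted committee members cannot (i) prevent an accusation from reaching $C_R$, nor (ii) stop the eviction once it arrives. Obstacle (i) is resolved by the direct partial-set-to-$C_R$ links in the network model, which deprive the leader of any position to block the forwarded witness; obstacle (ii) is resolved by the honest majority of $C_R$, which carries the consensus regardless of the corrupted minority. A subtler point I would address is that the malicious leader cannot avoid leaving signed evidence: under the semi-commitment scheme any inconsistency it introduces is precisely a signed message versus an honestly observed hash, so the required witness always exists. Assembling these observations, and taking a union bound over the negligible failure probabilities from Section~\ref{sec:sec_com}, yields the claim.
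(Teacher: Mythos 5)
Your proposal is correct and follows essentially the same route as the paper's proof: an honest partial set member (guaranteed with high probability by Section~\ref{sec:sec_com}) continuously monitors the leader and inevitably captures a signed witness that the leader cannot repudiate, and the honest majority of $C_R$ then carries out the eviction. Your extra detail on the intra-committee vote, the certificate $Cert$, the direct partial-set-to-$C_R$ channels, and Algorithm~\ref{alg:re-select} merely fills in protocol steps the paper's terser argument leaves implicit, rather than constituting a different approach.
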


\begin{proof}
According to the discussion in Section~\ref{sec:sec_com}, with high probability there is at least one honest node in the partial set and the referee committee has an honest majority. Therefore, as a leader's action is always monitored by the partial set during the execution of the system, any irregular behavior from the leader will be detected and a witness will be inevitably grasped by the non-faulty partial set member. At the same time, as the evidence is signed by the leader itself, a malicious leader can never deny the charges.
\end{proof}

%Now we claim that the given procedure is both complete and sound:

\begin{claim}\label{cla:so}
A trustful leader will never be framed up by a faulty partial set member, as long as $C_R$ has an honest majority.
\end{claim}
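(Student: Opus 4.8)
The plan is to show that the accusation mechanism structurally rules out any successful frame-up of an honest leader, and that the honest majority of $C_R$ supplies the final filter at the decision stage. First I would recall the two ingredients of an accusation. To impeach a leader, a partial set member must present a witness $W = (m_l, m_0)$ in which $m_l$ is a message \emph{signed by the leader}, and such a witness counts as \emph{valid} only if the pair $(m_l, m_0)$ exhibits a concrete deviation from the protocol (for instance, $m_l$ being the broadcast member list while $m_0$ is a committee semi-commitment with $m_0 \neq H(m_l)$). Thus any eviction ultimately rests on the existence of a valid witness against the leader.

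Next I would argue that no such valid witness exists for a trustful leader. Since the leader follows the protocol, every message it actually signs is internally consistent; in the semi-commitment case the member list $m_l$ it distributes always satisfies $H(m_l) = H_k = m_0$, so a genuinely leader-signed pair can never derive a dishonest behavior. The only remaining route for the faulty partial set member is to fabricate a leader-signed $m_l$ that would expose a fake inconsistency, which requires forging the leader's signature. This is infeasible: under the threat model the adversary is probabilistic polynomial-time and the PKI-backed signature scheme is existentially unforgeable, so the faulty member succeeds only with negligible probability. Hence, except with negligible probability, no valid witness against an honest leader can be assembled.

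Finally I would invoke the honest majority of $C_R$ to close the argument. Even if the faulty member forwards its (necessarily invalid) witness together with a certificate to $C_R$ and triggers Algorithm~\ref{alg:re-select}, each honest participant of $C_R$ independently checks the leader's signature on $m_l$ and verifies whether $(m_l, m_0)$ truly witnesses misbehavior. Because no valid witness exists, every honest member rejects the prosecution, and since honest members form a majority of $C_R$, the \textit{CONSENSUS} step cannot output an eviction of the leader. Combining the three steps, an honest leader is evicted only with negligible probability, which establishes the claim.

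I expect the main obstacle to be the second step: pinning down precisely what it means for a witness to derive a dishonest behavior and verifying that, across \emph{all} message types the protocol permits a leader to sign, an honest leader never produces a pair that could be turned into a valid witness without forging a signature. In particular, care is needed to rule out a faulty member repurposing genuinely signed honest messages; since those messages are protocol-consistent by construction the pairs they form are never valid, but this must be confirmed for each admissible accusation type rather than only the semi-commitment example given in the text.
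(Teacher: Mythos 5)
Your proposal is correct and takes essentially the same approach as the paper's proof: both arguments rest on the observation that a valid witness must contain a message genuinely signed by the leader, which a faulty partial set member cannot counterfeit given the security of the digital signature scheme. Your extra steps---noting that an honest leader's genuinely signed messages are always protocol-consistent and so can never form a valid witness, and that the honest majority of $C_R$ rejects any invalid prosecution---simply make explicit what the paper's terser proof leaves implicit.
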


\begin{proof}
We mention that a witness is valid if and only if the first part of it is a message signed by the leader. For the security of the digital signature scheme, a faulty partial set member cannot counterfeit a shred of evidence which must be a leader's signed message. Therefore, a trustful leader will never be unjustly accused.
\end{proof}

%As proved above, the probability that more than half members in $C_R$ are faulty is negligible. Thus, we claim that our recovering procedure remains complete and sound with high probability.
As proved above, any leader can never behave badly, such as tampering with cross-shard transactions, proposing blocks with invalid transactions and so on. Otherwise, it will be evicted until an honest node becomes the leader. As a consequence, a non-empty valid block will be proposed by the end of each round. Thus, we claim that our design has both the safety property and liveness property.

\section{Simulation}\label{sec:simulation}
% 假设：每个节点的算力不同
% 实验主体：两个设计对比，①leader随机选，②leader根据reputation选
% 衡量的指标：两种设计单位时间内处理的交易量（tps）
% Todo: 设计好假设和setting
% 比如，一个committee处理的tps与leader的算力正相关
% 1. 节点的诚实算力服从P分布
% 2. committee处理的交易数与leader的算力
% 3. 

% 我们的模拟逻辑：
% 1. 目的：验证reputation是否能帮我们选出诚实算力较高的leader，从而提高处理交易的效率
% 2. 算力建模：每个节点的诚实算力量

% 这里随手记一些逻辑：
% 1. 我们的机制保证最终作恶者不能占据每一个committee的较多数
% 2. 从而每笔交易结果一定正确
% 3. 任何一笔交易，只要其被任何一个节点的诚实算力处理过，则这一交易最终一定被处理，并且结果与任何一节点诚实算力处理后的结果相同
% 4. 在协议中，因为inter-committee的交易会被拆分成intra-committee的交易，所以在模拟时只考虑后者的存在

In this section, we conduct several simulations to evaluate the performance of our reputation mechanism.
\subsection{Setup}
%As discussed in Section~\ref{sec:analysis}, the theoretical analysis proves that our design is secure. More specifically, with overwhelming probability, each committee has an honest majority at any time and only valid transactions will be accepted. For completeness, this section attaches more importance to the performance evaluation. In practice, there is no well-developed and proven sharding blockchain system yet. And it is infeasible and kind of risky to apply a novel design to real systems. As an alternative, we adopt the simulation method to evaluate the performance of our reputation mechanism. 

As discussed in Section~\ref{sec:analysis}, the theoretical analysis proves that our design is secure. More specifically, with overwhelming probability, each committee has an honest majority at any time and only valid transactions will be accepted. For completeness, this section attaches more importance to the performance evaluation. In practice, there is no well-developed and proven sharding blockchain system yet, and it is infeasible and kind of risky to apply a novel design to real systems. As an alternative, we adopt the simulation method to evaluate the performance of our reputation mechanism.

There are several basic assumptions in our simulation. 
\begin{itemize}
    \item As mentioned previously, sharding protocols work in rounds. For convenience, we discretize the time into time slots with a fixed span.
    \item The resource cost of a transaction is assumed to be linear with the number of its inputs, which is realistic in real-life transaction-processing.
    \item Leaders suffer from extra resource consumption on the decision making, reputation updating and other additional tasks, and only a $p_l$ fraction of computation resources is left to process transactions. Meanwhile, common members are also burdened with communication in reaching consensus, therefore with a $p_m$ fraction of computation resources left to process transactions. We suppose that $p_l < p_m$.
\end{itemize}

\begin{figure}[t]
    \centering
    \includegraphics[width=3.4in]{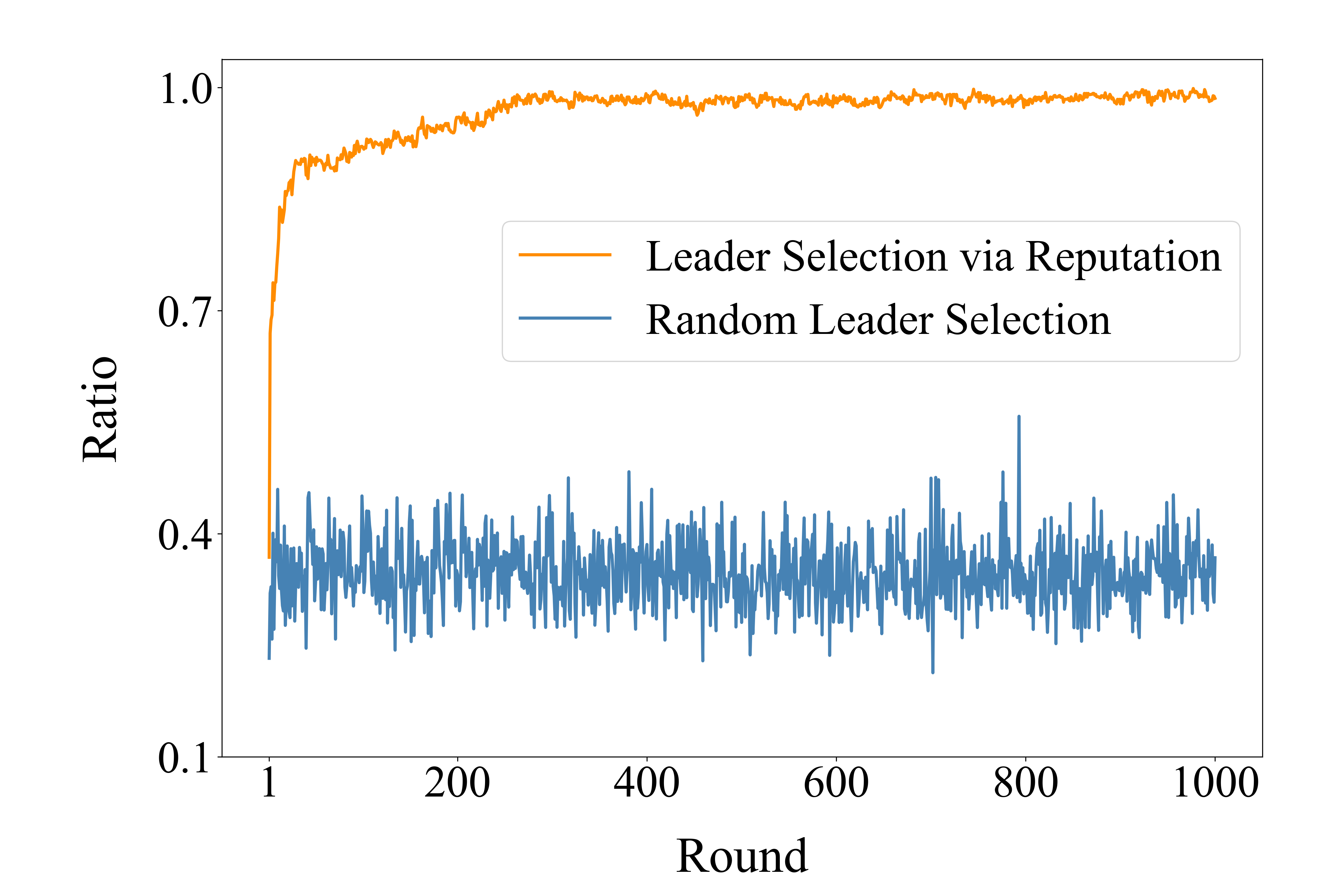}
    \caption{The ratio of average leader computer resources to the maximal possible average leader computer resources against round number under two types of leader selection schemes. The total node number is 2,000 and the number of committees is 20. The orange line shows the case of leader selection via reputation, while the blue line shows the case of random leader selection.}
    \label{fig:aver_leader_com}
\end{figure}

Without loss of generality, we mainly consider intra-shard transactions. Essentially speaking, the processing of cross-shard transactions also depends on the intra-committee consensus\footnote{The core step to process cross-shard transactions is the intra-committee voting on the status of relevant inputs.}, And intra-shard transactions and cross-shard transactions cause no difference on the reputation-updating process, which happens after voting on the transaction/input list.

In order to emphasize our focus, we neglect the possible misbehaviors and assume that the honest computational resources of nodes in the system follow a beta distribution, which suits the reality well. The resource cost of a transaction is assumed to be positively associated with the number of its inputs. When processing transactions, leaders suffer from extra resource consumption on the decision making, reputation updating and other additional tasks.

In simulation, at the beginning of each round, 2,000 nodes are uniformly split into 20 committees, each with size 100. In order to imitate the realistic condition better, we crawl 208,936 transactions from Bitcoin network, and ignore those with more than 12 inputs (which barely happen in today's network, see\figurename~\ref{fig:inputs_num_cdf}). Resembling reality, these transactions are sent to the system at a certain frequency. For each transaction $\mathsf{tx}$, it is assigned to the committee, the ID of which is $\mathsf{tx\_hash} \bmod 20$.

For the specific parameters, we set the remaining fraction of computation resources to process transactions for leaders and common members to be $p_l = 0.7$ and $p_m = 0.9$ correspondingly. Further, for the two parameters in bonus part of the reputation updating process (see~(\ref{eq:bonus})), we set $\sigma = 0.1$ and $\omega = 0.5$.

In order to observe the effect of the reputation mechanism, we evaluate it using the following metrics.
\begin{itemize}
    \item \textbf{Leaders' computation resource: } the average computation resources of all leaders in one round.
    \item \textbf{Transactions processed per round: } number of the processed transactions in one round.
\end{itemize}
We mainly compare the outcomes of leader selection scheme via reputation and random leader selection scheme, which is widely adopted by other sharding blockchain solutions~\cite{Kokoris-KogiasJ18}.

\subsection{Results}

We run the simulation for 1,000 rounds for both cases: leader selection scheme via reputation and random leader selection scheme. \figurename~\ref{fig:aver_leader_com} shows the ratio of the average computation resources of 20 leaders to the average computation resources of 20 most computationally-intensive nodes against round number under both schemes. For the random leader selection scheme, the ratio fluctuates with an expected value of approximately 0.37. Meanwhile, for our leader selection scheme via reputation, the ratio continuously rises and reaches 1 in 1,000 rounds. There are two reasons for the small oscillation on the ratio for our scheme: (1) leaders suffer more loss on computation resources than common members, and (2) leaders gain less reputation than those common members with high computation resources. As a result, those nodes whose computation resources are a bit lower than maximal also have the chance to process all transactions provided by the leader, therefore gaining higher reputation, and becoming the leader.

\begin{figure}[t]
    \centering
    \includegraphics[width=3.4in]{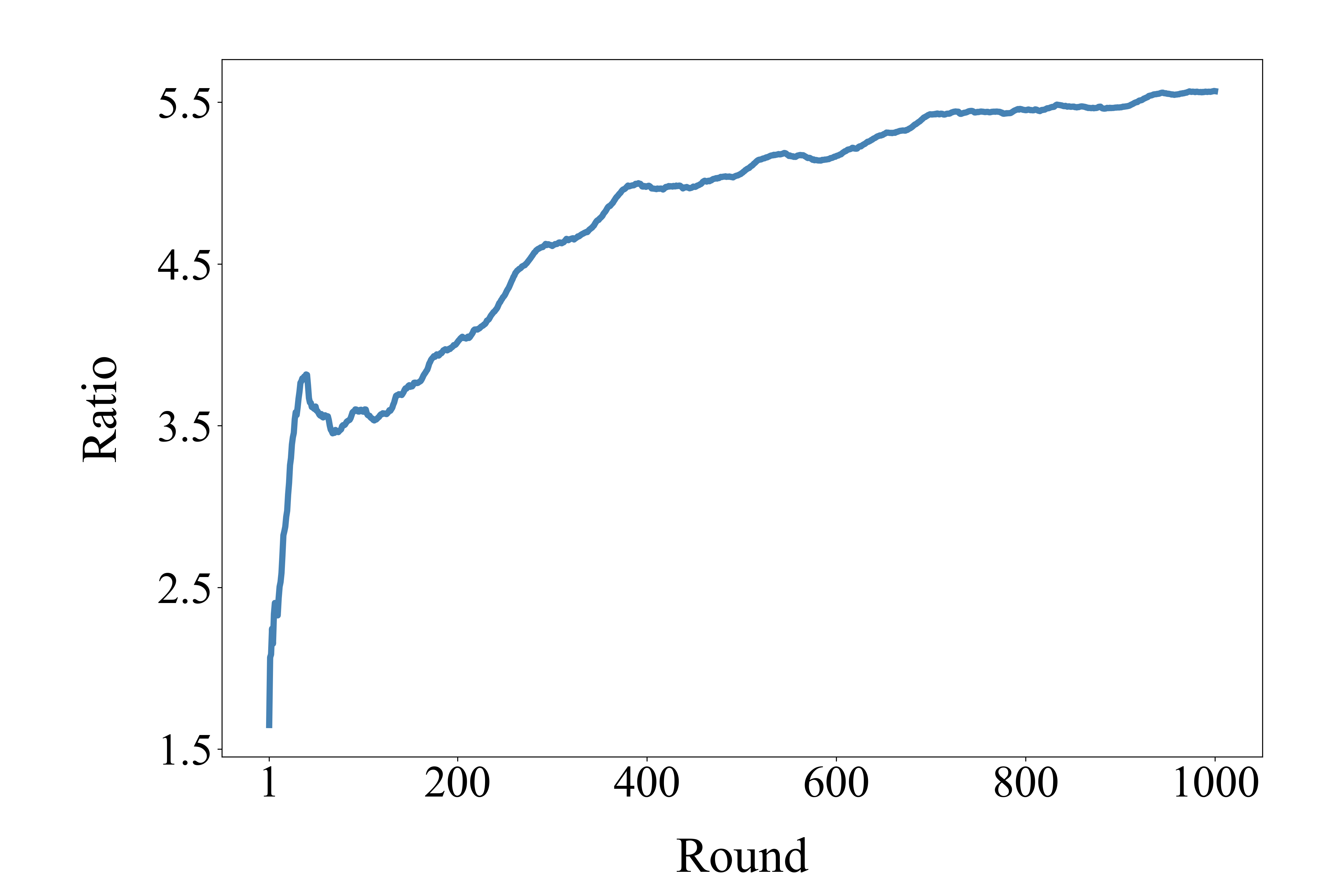}
    \caption{The ratio of accumulated number of transactions processed under two types of leader selection schemes against round number. The total node number is 2,000 and the number of committees is 20. The numerator is the corresponding number for the \emph{leader selection scheme via reputation}, and the denominator is the corresponding number for the \emph{random leader selection scheme}.}
    \label{fig:tran_num_ratio}
\end{figure}

\figurename~\ref{fig:tran_num_ratio} shows the ratio of accumulated number of transactions processed under two types of leader selection schemes. It turns out that the ratio continuously rises in 1,000 rounds. By the end of round 1,000, the number of processed transactions under leader selection scheme with reputation is approximately $5.5 \times$ the corresponding number under the random leader selection scheme. Such result is a direct corollary of the previous result. As leaders are with higher computation resources, they can include more transactions in the consensus in each round, and therefore processing more transactions.

Taken together, our leader selection scheme via reputation successfully picks out those nodes with higher computation resources and makes them leaders, therefore dramatically improves the transaction-processing speed.

\section{Related Work}\label{sec:related}

Elastico~\cite{LuuNZBGS16} is the first sharding-based protocol for public blockchains which can tolerate up to a fraction of 1/4 of malicious parties. Unfortunately, it has a very weak safety guarantee as the randomness in each epoch of the protocol can be biased by the adversary. Meanwhile, Elastico's small committees (only about 100 nodes in a committee) cause a high probability to fail under a 1/4 adversary, and cannot be released in a Proof-of-Work (PoW) system~\cite{GervaisKCC14}. Specifically, when there are 16 shards, the failure probability is 97\% over only 6 epochs~\cite{Kokoris-KogiasJ18}. 
OmniLedger~\cite{Kokoris-KogiasJ18} also allows the adversary to take control of at most 25\% of the validators as well as assuming the adversary to be mildly-adaptive, nevertheless, it depends on the assumption that there is a never-absent trusty client to schedule the leaders' interaction when handling cross-shard transactions. 
RapidChain~\cite{ZamaniM018} enhances the efficiency of sharding-based blockchain protocols on a large scale, but the protocol guarantees high efficiency only when leaders of each committee are honest, an unrealistic assumption in practice. Concretely, in expectation, there is a proportion of 1/3 leaders that are malicious in a round. Under this condition, cross-shard transactions may hardly be included in a block. Furthermore, the protocol does not have an explicit incentive for nodes to participate in. At the same time, all the above postulate a good connection between any pair of truthful nodes, which causes a huge burden in creating connection channels. 

%Different from other sharding protocol, Ostraka~\cite{ManuskinME20} shards the nodes inside, embracing the fact that there are several super nodes with enough computing resources. It provides a parallel algorithm for these super nodes to validate transactions concurrently among multi processors. It is proved indistinguished with original topological-order transactions verification algorithm in Bitcoin. However, the sharding algorithm only applies to nodes with abundant computing power.

Optchain~\cite{NguyenNDT19} provides a transaction assignment algorithm to reduce the number of cross-shard transactions, while keeping load balance among shards. It mainly uses a PageRank-like algorithm to assign linked transactions to the same shard. The disadvantage is that this algorithm is client-driven, requiring users to query information and do much computation, but without any incentive.
\cite{LiuLYLYW20} focuses on the processing of cross-shard transactions, which is of vital importance to the system efficiency. By constructing multiple inputs into a Merkle tree structure, the number of BFT calls is largely reduced. %However, such design suffers from a poor liveness once any leader is dishonest.

All above work fail to maintain liveness concerning malicious committee leaders or transaction coordinators. ~\cite{DangDLCLO19} solves this problem by involving a reference committee, with an honest majority, and uses two-phase commit (2PC) and two-phase locking (2PL) protocols to coordinate all cross-shard transactions. However, such solution brings a heavy overhead on the reference committee, which is a major shortcoming. 
SSChain~\cite{ChenW19} introduces a root chain to process cross-shard transactions. It solves the liveness problem and ensure the system safety. However, the root chain is easily corrupted. Furthermore, both above two solutions cause a large reduce on the parallelism brought by sharding, as the reference committee/root chain needs to process cross-shard transactions sequentially.

\section{Conclusion}\label{sec:conclusion}

We identify five basic issues in sharding blockchain. Specifically, we analyze the challenges involved and present our suggested solutions. In order to overcome the performace bottlenecks caused by cross-shard transactions, we introduce the concept of reputation and propose a reputation mechanism. By scoring each node according to its historical behavior, the term of reputation helps to locate those nodes with more honest computational resources. By assigning them to high-workload positions, the reputation mechanism enhances the system's capability to process transactions. In addition, we introduce a semi-commitment scheme and a recovery procedure. They together enable users to detect and evict malicious leaders, thus trading safely in the system. Theoretical analysis and simulation results confirm that our design can significantly improve the system performance, without sacrificing any safety and liveness.

\ifCLASSOPTIONcaptionsoff
  \newpage
\fi

% trigger a \newpage just before the given reference
% number - used to balance the columns on the last page
% adjust value as needed - may need to be readjusted if
% the document is modified later
%\IEEEtriggeratref{8}
% The "triggered" command can be changed if desired:
%\IEEEtriggercmd{\enlargethispage{-5in}}

% references section

% can use a bibliography generated by BibTeX as a .bbl file
% BibTeX documentation can be easily obtained at:
% http://mirror.ctan.org/biblio/bibtex/contrib/doc/
% The IEEEtran BibTeX style support page is at:
% http://www.michaelshell.org/tex/ieeetran/bibtex/
%\bibliographystyle{IEEEtran}
% argument is your BibTeX string definitions and bibliography database(s)
%\bibliography{IEEEabrv,../bib/paper}
%
% <OR> manually copy in the resultant .bbl file
% set second argument of \begin to the number of references
% (used to reserve space for the reference number labels box)
\bibliographystyle{IEEEtran}
\bibliography{main}

% biography section
% 
% If you have an EPS/PDF photo (graphicx package needed) extra braces are
% needed around the contents of the optional argument to biography to prevent
% the LaTeX parser from getting confused when it sees the complicated
% \includegraphics command within an optional argument. (You could create
% your own custom macro containing the \includegraphics command to make things
% simpler here.)
%\begin{IEEEbiography}[{\includegraphics[width=1in,height=1.25in,clip,keepaspectratio]{mshell}}]{Michael Shell}
% or if you just want to reserve a space for a photo:
\begin{IEEEbiography}[{\includegraphics[width=1in,height=1.25in,clip,keepaspectratio]{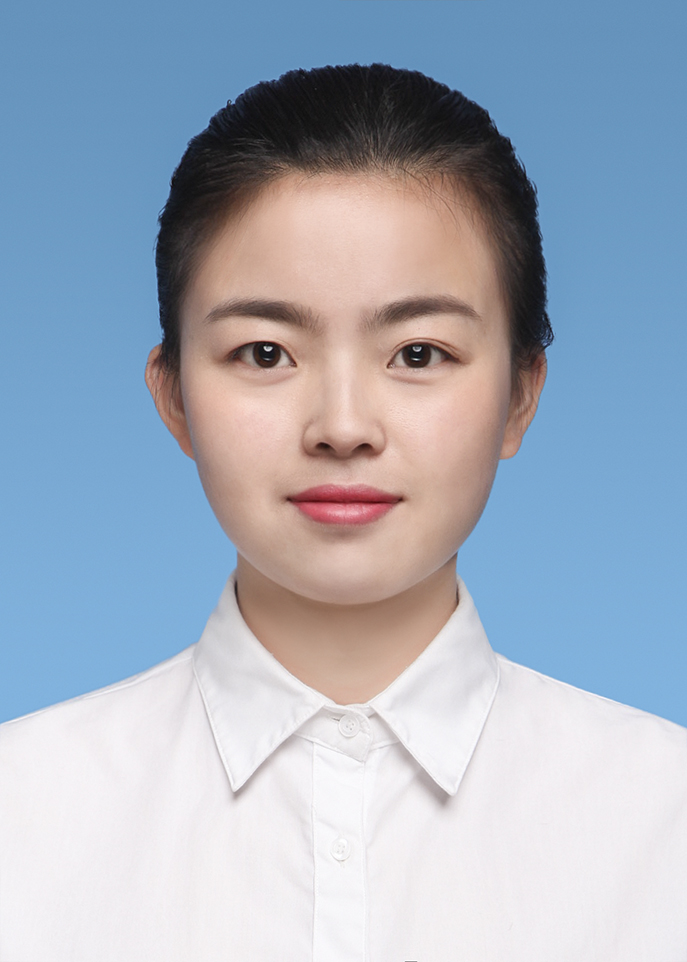}}]{Mengqian Zhang} received the B.S. degree in computer science from Ocean University of China
in 2018. She is currently pursuing the doctor’s
degree at the Department of Computer Science
and Engineering, Shanghai Jiao Tong University,
China. Her current research interests include Blockchain, Algorithmic Game Theory and Mechanism Design.
\end{IEEEbiography}
\begin{IEEEbiography}[{\includegraphics[width=1in,height=1.25in,clip,keepaspectratio]{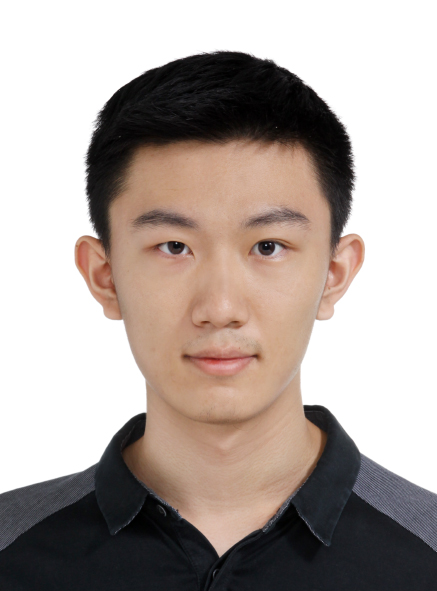}}]{Jichen Li} received the B.S. degree in computer science at School of EECS, Peking University in 2020. He is currently pursuing the doctor’s degree at Center on Frontiers of Computing Studies, Peking University. His current research interests focus on Blockchain and Mechanism Design.
\end{IEEEbiography}
\begin{IEEEbiography}[{\includegraphics[width=1in,height=1.25in,clip,keepaspectratio]{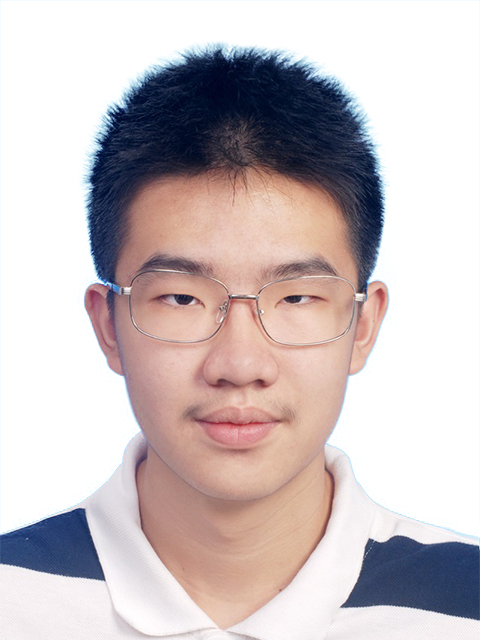}}]{Zhaohua Chen} is now a fourth-year undergraduate student at School of EECS, Peking University. His research interest lies in various subjects of theoretical computer science.
\end{IEEEbiography}
\begin{IEEEbiography}[{\includegraphics[width=1in,height=1.25in,clip,keepaspectratio]{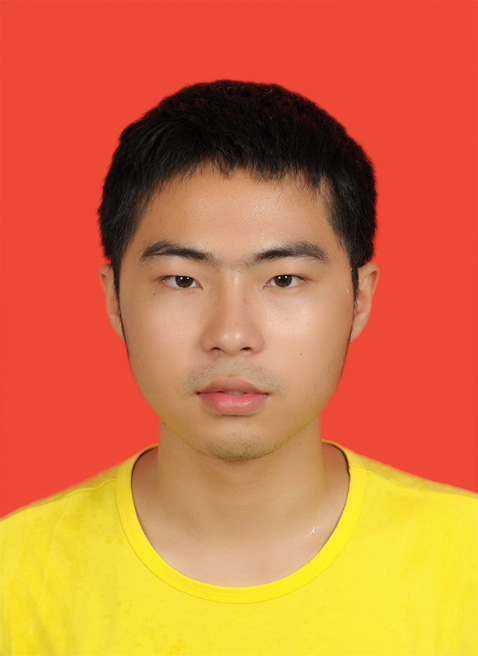}}]{Hongyin Chen} received the B.S. degree at School of EECS, Peking University in 2020. He is currently pursuing the doctor’s
degree at Center on Frontiers of Computing Studies, Peking University. His current research interests include Blockchain and Mechanism Design.
\end{IEEEbiography}
\begin{IEEEbiography}[{\includegraphics[width=1in,height=1.25in,clip,keepaspectratio]{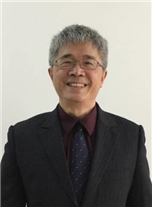}}]{Xiaotie Deng} got his BSc from Tsinghua University, MSc from Chinese Academy of Sciences, and PhD from Stanford University in 1989. He is currently a chair professor at Peking University. He taught in the past at Shanghai Jiaotong University, University of Liverpool, City University of Hong Kong, and York University. Before that, he was an NSERC international fellow at Simon Fraser University. Deng’s current research focuses on algorithmic game theory, with applications to Internet Economics and Finance. His works cover online algorithms, parallel algorithms, and combinatorial optimization. He is an ACM fellow for his contribution to the interface of algorithms and game theory, and an IEEE Fellow for his contributions to computing in partial information and interactive environments.
\end{IEEEbiography}

% if you will not have a photo at all:
% \begin{IEEEbiographynophoto}{John Doe}
% Biography text here.
% \end{IEEEbiographynophoto}

% insert where needed to balance the two columns on the last page with
% biographies
%\newpage

% \begin{IEEEbiographynophoto}{Jane Doe}
% Biography text here.
% \end{IEEEbiographynophoto}

% You can push biographies down or up by placing
% a \vfill before or after them. The appropriate
% use of \vfill depends on what kind of text is
% on the last page and whether or not the columns
% are being equalized.

%\vfill

% Can be used to pull up biographies so that the bottom of the last one
% is flush with the other column.\bigcup
%\enlargethispage{-5in}

% that's all folks
\end{document}